\newcommand*\diff{\mathop{}\!\mathrm{d}}
\definecolor {processblue}{cmyk}{0.96,0,0,0}
\newtheorem{mydef}{Definition}
\newtheorem{theorem}{Theorem}
\newtheorem{lemma}{Lemma}
\newtheorem{problem}{Problem}
\newtheorem{prop}{Property}
\newtheorem{rem}{Remark}
\newcommand{\xMapsto}[2][]{\ext@arrow 0599{\Mapstofill@}{#1}{#2}}
\def\Mapstofill@{\arrowfill@{\Mapstochar\Relbar}\Relbar\Rightarrow}
\title{Continuous \textit{indetermination} and average likelihood minimization}
\date{\today}
\author[,1]{Pierre Bertrand\thanks{Corresponding author: pierre.bertrand@ens-cachan.fr}}
\author[1,2]{Michel Broniatowski}
\author[3]{Jean-François Marcotorchino}
\affil[1]{Laboratoire de Probabilités, Statistique et Modélisation, Sorbonne Université, Paris, France}
\affil[2]{CNRS UMR 8001, Sorbonne Université, Paris, France}
\affil[3]{Institut de statistique, Sorbonne Université, Paris, France}
\begin{document}
\maketitle

\begin{abstract}
The authors transpose a discrete notion of \textit{indetermination} coupling in the case of continuous probabilities. They show that this coupling, expressed on densities, cannot be captured by a specific copula which acts on cumulative distribution functions without a high dependence on the margins. Furthermore, they define a notion of average likelihood which extends the discrete notion of couple matchings and demonstrate it is minimal under indetermination. Eventually, they leverage this property to build up a statistical test to distinguish indetermination and estimate its efficiency using the Bahadur's slope.
\end{abstract}

{\bf Keywords:} \textit{\small{Indetermination, Likelihood, Statistical test, Bahadur's slope, Coupling Functions, Copula}}

\section{Introduction}

A key tool in statistics is the likelihood estimator which consists, given a realization $W=w$ of a random variable $W$ to estimate the probability under a candidate law $P$ to have $W'\sim P=w$. Typically, if $P_{\pi}$ has a density function $\pi$ it is represented by $\pi(w)\diff w$. In a discrete case it would represent the probability to have an exact match between $W$ and $W'$. Precisely, in a precedent paper~\cite{bertrand:hal-03086553} we demonstrated how \textit{indetermination} concept could reduce those matchings in the discrete case. This paper is interested in the continuous transposition of this notion. Although the definition of the continuous indetermination structure is quite simple at first glance, it is motivated by an anticipated link with the likelihood reduction as the continuous extension of discrete couple matchings and conveys, actually, some interesting properties. 

We begin with the introduction of \textit{indetermination}. Given two discrete marginal laws $\mu = \mu_1\ldots\mu_p$ and $\nu=\nu_1\ldots\nu_q$, a coupling function $C$ constructs a probability law $\pi$ on the product space where $\pi_{u,v}=C(\mu_u,\nu_v)$. Theoretical considerations based on a work of Csisz\'ar \cite{csiszar1991least}, a summarized version of which is expressed in \cite{bertrand2020statistical} lead to consider two "natural" equilibria. Shortly, Csisz\'ar works on projection functions: given a set of constraints, a first guess has to be projected into the eligible space. Then, adding some natural properties to be respected by the projection function, it leads to consider only two possible criteria : either least square or maximum entropy. Eventually, under some reformulations each projection function is associated to a specific coupling function, respectively: the usual independence $\pi^{\times}_{u,v} = C^{\times}(\mu,\nu)_{u,v} = \mu_u\nu_v,~\forall (u,v)$ and the so-called indetermination (first quoted as such in~\cite{MAR84}) whose formula is given by:
\begin{equation}
\pi^+_{u,v} = C^+(\mu,\nu)_{u,v} = \frac{\mu_u}{q}+\frac{\nu_v}{p}-\frac{1}{pq},~\forall(u,v)
\end{equation}

Additional constructive properties of discrete indetermination together with interpretations and applications can be found in \cite{bertrand:hal-03086553} where we develop its analysis. Some demonstrations with useful interpretations (typically justifying the name "indetermination" or "indeterminacy") require the introduction of "Mathematical Relational Analysis" which is detailed in \cite{MAM79}, \cite{MAR86} or \cite{AHP09a}. We will not present it again in this paper but we refer the interested reader to the aforementioned articles. 

By construction (related to the least square cost it minimizes), $\pi^+$ is the optimal law to reduce matching in case margins are fixed. If we draw two couples $W_1=(U_1,V_1)$ and $W_2=(U_2,V_2)$ under $\pi^+$ the probability to have $W_1=W_2$ is minimal. The natural transposition in the continuous case of this matching property would be possibly the motivation for a continuous indetermination. 

Actually, the idea of a continuous indetermination appears first in~\cite{MarcoGSIDual} in terms of coupling, and in~\cite{HUY15} in terms of "indetermination copula". Let us precise that the paper~\cite{HUY15} though it conveys an interesting method to detect anomalies, lacks the theoretical foundations  of some concepts  it introduces. Paper~\cite{MarcoGSIDual} is more theoretical but it tackles the subject in a too short way. In consequence the continuous "indetermination copula" must be  properly introduced and refers to a notion that deserves to be investigated in depth, it is precisely one goal of this article.

In this paper we show that we cannot define an "indetermination copula" \textit{per se} but that it can be locally defined. Shortly, whilst an independence copula is easily defined without reference to the margins, an indetermination coupling uses a margin-dependent copula. A couple of margins to share an indetermination copula requires them to be close in a sense that we will define later in the paper. 

Additionally, we interpret the continuous indetermination as a lower bound on the average likelihood $\overline{L}$, a notion that we introduce as an equivalent of discrete couple matchings. We show that based on a list of $n$ realizations of $W$ a random variable under indetermination, the correlation vanishes under the average likelihood estimator which becomes sensitive only to both margins. Namely, if $P_n$ is the empiric measure associated with the $n$ realizations, $\overline{L}(P_n,\pi)$ only depends on the margins of $\pi$ whatever the underlying copula is. 

Finally, we define a statistical test $t_n$ to reject or accept the hypothesis of continuous indetermination based on its property to minimize the average likelihood. Following the usual track, detailed in~\cite{groeneboom1977bahadur} and dedicated to the analysis of statistical test, we compute the Bahadur's slope of $t_n$ as an estimation of its quality.

This paper is composed of two sections, section~\ref{sec:copula} introduces the continuous indetermination together with some properties and studies the notion of indetermination copula. Section~\ref{sec:likelihood} defines a notion of average likelihood $\overline{L}$ between two probabilities and demonstrates that, applied to probabilities in a product space, its value between $\pi$ against $\pi^+$ only depends on the four associated margins. It notably implies that any $\pi$ with the same margins as $\pi^+$ has the same average likelihood as $\pi^+$ against itself. Eventually, in subsection~\ref{ssec:test} we leverage on this property to build up a test to distinguish $\pi^+$ and analyse it at the light of the Bahadur's slope.

\section{Continuous indetermination \label{sec:copula}}

\subsection{Coupling function}
This chapter introduces a continuous indetermination coupling that, to our knowledge, has never been studied in depth 'see previous remarks). Extrapolating the discrete margins situation, we now suppose $\mu$ is a probability law on a segment $[a,A]$ and $\nu$ is a same type of probability but on a segment $[b,B]$. Let us also suppose they both have a density function, respectively $f$ and $g$. To define a probability measure $\pi$ we can operate on the density functions as we basically did in the discrete case (where we essentially added $\mu_u$ and $\nu_v$).

We follow and adapt the notations introduced in \cite{csiszar1984sanov} and \cite{groeneboom1979large} as well as some lemmas or theorems they cover and that we shall use in the sequel (section~\ref{sec:likelihood}). First, we begin by quoting $S_a=[a,A]$, $S_b=[b,B]$ and $S=[a,A]\times[b,B]$ together with the usual borelian set of probability measures on the three: $\Lambda_a$ set of probability measures on $(S_a,\mathcal{B})$, $\Lambda_b$ on $(S_b,\mathcal{B})$ and $\Lambda$ on $(S,\mathcal{B})$.

As a first step, we extend the definition of a coupling function in a continuous space. 

\begin{mydef}[Coupling functions (continuous case)]
\label{def:continuouscoupling}
~\\
$\mu\in\Lambda_a$ and $\nu\in\Lambda_b$ being probability laws, a coupling function $C$ operates on their density $(f,g)$ to define a density $C(\mu,\nu)=C(f,g)$ for a measure on the product space $S$. That measure respects some properties similar to the ones found in the discrete case: it is a probability law whose margins are $\mu$ and $\nu$. 
\end{mydef}

A typical coupling function is the independance quoted $C^{\times}$ (we extend here the discrete notation) which generates an eligible density under the formula:
$$C^{\times}(f,g)(x,y) = f(x)g(y),~\forall x\in S_a,~\forall y \in S_b$$

We would like to define a continuous version of the indetermination coupling. As usual, when we transpose a concept, we can either obtain it through computations or through a prior guess. We will follow both approaches. Let us first propose a prior guess. 

\begin{mydef}[Continuous indetermination density]
\label{def:continuous+}
$$C^{+}(f,g)(x,y) = \frac{f(x)}{B-b} + \frac{g(y)}{A-a} - \frac{1}{(A-a)(B-b)}$$
\end{mydef}

The formula comes from an adaptation of the discrete one, no guarantee is given on its truthfulness neither on its construction. 
First, to simplify any future computation, we set $a=b=0$ and $A=B=1$, converting any formula will be done using a dedicated affine transformation. 

We shall use an optimal transport problem to validate our prior guess, we transpose actually the discrete "Minimal Transport problem" using least square cost function into the continuous space:
\begin{problem}[Minimal Trade Problem]\label{pb:continous}
\begin{eqnarray*}
	 \min_{\pi}&& \int_0^1\int_0^1 \pi^2(x,y)\diff x \diff y\\
	 \textnormal{under constraints:}&&\\
 	 &&\int_0^1 \pi(x,y)\diff y \diff x = \mu_{x} \textnormal{ (first margin)}\\
	 &&\int_0^1 \pi(x,y)\diff x \diff y = \nu_{y} \textnormal{ (second margin)}\\
	 &&\int_0^1 \int_0^1 \pi(x,y)\diff x \diff y = 1\textnormal{ (mass preserving)}\\
	 &&\pi\geq 0
\end{eqnarray*}
\end{problem}
~\\
We then add inequality~\ref{cond:continuous} similar to the one used in the discrete case (see~\cite{bertrand:hal-03086553}) which ensures that our prior guess function is a probability law. As well as restricting ourselves to $[0,1]$ we force the margins to respect the condition:	
\begin{equation}\label{cond:continuous}
	\min_{x\in S_1} f(x) + \min_{y\in S_1} g(y) \ge 1
\end{equation}

\begin{prop}
\label{prop:minL2}
~\\
Under the \textit{ad hoc} hypothesis~\ref{cond:continuous}, the solution of problem~\ref{pb:continous} is nothing but our prior guess continuous coupling function applied to the margins $f$ and $g$, formally, the associated density function is:
$C^{+}(f,g)(x,y) = c^+_{f,g}(x,y) = (f(x)+g(y)-1)$
\end{prop}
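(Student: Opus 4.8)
The plan is to treat Problem~\ref{pb:continous} as a strictly convex quadratic minimization over an affine (hence convex) feasible set, so that any feasible stationary point of the Lagrangian is automatically the unique global minimizer. First I would derive the form of the candidate by the method of Lagrange multipliers: attaching a multiplier function $\lambda(x)$ to the first-margin constraint and $\eta(y)$ to the second, the stationarity condition of the Lagrangian $\int\int \pi^2 - \int \lambda(x)(\int \pi\,\diff y - f(x))\diff x - \int \eta(y)(\int \pi\,\diff x - g(y))\diff y$ with respect to $\pi(x,y)$ reads $2\pi(x,y) = \lambda(x)+\eta(y)$. Thus the optimal density is necessarily additively separable, $\pi(x,y) = \alpha(x)+\beta(y)$.

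Next I would pin down $\alpha$ and $\beta$ using the constraints. Integrating in $y$ gives $\alpha(x) + \bar\beta = f(x)$ and integrating in $x$ gives $\beta(y) + \bar\alpha = g(y)$, where $\bar\alpha = \int_0^1\alpha$ and $\bar\beta = \int_0^1\beta$, so that $\pi(x,y) = f(x)+g(y) - (\bar\alpha+\bar\beta)$. Integrating the first identity once more in $x$ and using $\int_0^1 f = 1$ yields $\bar\alpha + \bar\beta = 1$, which forces the candidate $c^+_{f,g}(x,y) = f(x)+g(y)-1$, exactly the claimed density.

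For rigour I would avoid delicate infinite-dimensional KKT bookkeeping and instead confirm optimality by a completing-the-square argument. Given any feasible $\pi$, set $h = \pi - c^+_{f,g}$; by the margin constraints $\int_0^1 h\,\diff y = 0$ for every $x$ and $\int_0^1 h\,\diff x = 0$ for every $y$. Expanding $\int\int \pi^2 = \int\int (c^+_{f,g})^2 + 2\int\int c^+_{f,g}\, h + \int\int h^2$, the cross term splits, after inserting $c^+_{f,g} = f(x)+g(y)-1$, into $\int f(x)(\int h\,\diff y)\diff x + \int g(y)(\int h\,\diff x)\diff y - \int\int h$, each piece of which vanishes by the zero-margin property of $h$. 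Hence $\int\int \pi^2 = \int\int (c^+_{f,g})^2 + \int\int h^2 \ge \int\int (c^+_{f,g})^2$, with equality only when $h\equiv 0$, giving both optimality and uniqueness.

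The only remaining point, and the place where hypothesis~\ref{cond:continuous} is used, is to check that $c^+_{f,g}$ is itself feasible: it is a probability density with the right margins by the computation above, and it is nonnegative precisely because $c^+_{f,g}(x,y) \ge \min_x f(x) + \min_y g(y) - 1 \ge 0$. This is the crux of the argument. Without~\ref{cond:continuous} the separable candidate could dip below zero, the positivity constraint $\pi \ge 0$ would become active, and the unconstrained-for-positivity minimizer would no longer solve the problem; the completing-the-square bound shows that whenever $c^+_{f,g}\ge 0$ holds it is the genuine minimizer, so the hypothesis is exactly what is needed. I expect this feasibility/positivity check to be the main obstacle, since it is the only step that is not a formal manipulation and that genuinely delimits the applicability of the result.
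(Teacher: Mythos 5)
Your proof is correct and takes essentially the same route as the paper: both arguments amount to expanding $\int_0^1\!\!\int_0^1\bigl(\pi-c^+_{f,g}\bigr)^2\diff x\,\diff y \ge 0$, using the margin and mass constraints to make the cross term vanish, and then invoking hypothesis~\ref{cond:continuous} solely to certify that $c^+_{f,g}$ is nonnegative and hence feasible. Your zero-margin residual $h$ makes the cancellation a bit more transparent and yields uniqueness of the minimizer as a bonus (and the Lagrangian derivation is just motivation), but the mathematical content coincides with the paper's proof.
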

\begin{proof}
~\\
We use definition~\ref{def:continuous+} to efficiently solve problem~\ref{pb:continous} by noticing that 
$$\int_{x=0}^1\int_{y=0}^1\left[\pi(x,y)-(f(x)+g(y)-1)\right]^2\diff x \diff y \ge 0$$
Which can be rewritten (using constraints on margins):
\begin{eqnarray*}
\int_{x=0}^1\int_{y=0}^1\pi^2(x,y)\diff x \diff y &\ge & \int_{x=0}^1\int_{y=0}^1 \left(-f^2 -g^2 -1 -2\pi -2fg +2f\pi +2g\pi +2f +2g\right) \diff x \diff y\\
&=& \int_{x=0}^1\int_{y=0}^1 \left(-f^2 -g^2 -1 -2 -2fg +2f^2 +2g^2 +2 +2\right) \diff x \diff y\\
&=& \int_{x=0}^1\int_{y=0}^1 \left(f^2 + g^2 -2fg +1\right) \diff x \diff y\\
&=& \int_{x=0}^1\int_{y=0}^1 \left(f + g -1 \right)^2 \diff x \diff y
\end{eqnarray*}
At this stage we have shown 
\begin{equation*}
\min_{\pi} \int_0^1\int_0^1 \pi^2(x,y)\diff x \diff y \geq \int_{x=0}^1\int_{y=0}^1 \left(C^+(f,g)(x,y)\right)^2 \diff x \diff y
\end{equation*}
Eventually, using hypothesis~\ref{cond:continuous} we notice that $c^+_{f,g}(x,y) = (f(x)+g(y)-1)$ is eligible as a density function since always positive. Margins constraints are also satisfied using the same considerations as in the discrete case. This provides the second inequality:
\begin{equation*}
\min_{\pi} \int_0^1\int_0^1 \pi^2(x,y)\diff x \diff y \le \int_{x=0}^1\int_{y=0}^1 \left(C^+(f,g)(x,y)\right)^2 \diff x \diff y
\end{equation*}
which allows us to conclude.
\end{proof}

\begin{mydef}[Indetermination coupling]
~\\
Given two probability laws on $S_1$($S_a$ with $a=1$): $U\sim\mu,~(f,F)$ and $V~\sim\nu,~(g,G)$ we say that the random variable $W$ is an indetermination coupling of $U$ and $V$, quoted $U\oplus V$ when its density $\pi^+$ is 
$\pi^{+}_{f,g}(x,y) = f(x)+g(y)-1$
\end{mydef}

Let us compute the cumulative distribution function $\Pi^+_{F,G}$ associated to the just expressed density $\pi^+_{f,g}$; it is only a quick integration of the density leading to a second characterization of an indetermination coupling. 

\begin{prop}[Cumulative distribution function for indetermination ]
~\\
If $\pi^+_{f,g}(x,y) = f(x)+g(y)-1$ is the density of a random variable, with $f,g$ two densities on $S_1$ ensuring $\pi^+_{f,g}$ is positive (hypothesis~\ref{cond:continuous}) and whose cumulative distribution functions are $F$ and $G$ respectively, then the associated cumulative distribution function, quoted $\Pi^+_{F,G}$, is given by:
$$\Pi^+_{F,G}= yF(x)+xG(y)-xy$$
\end{prop}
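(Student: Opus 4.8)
The plan is to obtain the cumulative distribution function directly from its definition as the integral of the density over the lower-left rectangle, namely
$$\Pi^+_{F,G}(x,y) = \int_0^x\int_0^y \pi^+_{f,g}(s,t)\diff t\diff s = \int_0^x\int_0^y \bigl(f(s)+g(t)-1\bigr)\diff t\diff s,$$
where the lower limits are both $0$ because we have normalized the supports to $S_1=[0,1]$, so that the mass to the left of (respectively below) $0$ vanishes.

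Next I would split the double integral into three pieces by linearity and evaluate each one by Fubini's theorem, which applies since $f$ and $g$ are densities and the integrand is integrable on the bounded rectangle. The term in $f(s)$ does not depend on $t$, so integrating over $t\in[0,y]$ simply multiplies by $y$ and leaves $y\int_0^x f(s)\diff s = yF(x)$, using $F(x)=\int_0^x f$. Symmetrically, the term in $g(t)$ does not depend on $s$, so it yields $x\int_0^y g(t)\diff t = xG(y)$. The constant term contributes $-\int_0^x\int_0^y 1\,\diff t\diff s = -xy$. Summing the three gives exactly $yF(x)+xG(y)-xy$.

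Because each step is an elementary application of Fubini together with the fundamental relation between a density and its cumulative distribution function, there is essentially no obstacle here; the only points deserving a word are that the separation of variables is legitimate (the integrand is integrable on a bounded set) and that the lower endpoints vanish, which is guaranteed by $F(0)=G(0)=0$ on the normalized support. Note that the positivity hypothesis~\ref{cond:continuous} is not actually needed for the integration itself—it only guarantees that $\pi^+_{f,g}$ is a genuine density in the first place—so I would invoke it merely by reference rather than re-use it in the computation.
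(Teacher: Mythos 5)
Your proof is correct and coincides with the paper's own argument: the paper states the result follows from ``a quick integration of the density,'' which is exactly the direct computation $\int_0^x\int_0^y (f(s)+g(t)-1)\diff t\diff s = yF(x)+xG(y)-xy$ that you carry out. Your additional remarks on Fubini, the vanishing lower limits, and the role of hypothesis~\ref{cond:continuous} are accurate but not essential; nothing is missing.
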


\subsection{Constructive method for adapted margins}
Beforehand, we assumed margins are respecting hypothesis~\ref{cond:continuous}, we propose here a method to construct adapted margins out of any couple. 

\begin{prop}[Constructive margins]
\label{prop:constructiveAlpha}
~\\
A couple $(f,g)$ of densities fulfills hypothesis~\ref{cond:continuous} if and only if, it exists an $\alpha, 0 \le \alpha \le 1$ and a couple of densities $(r,s)$ such that :
\begin{eqnarray}
	f = (1-\alpha)r + \alpha & \mbox{ and } & g = \alpha s + 1-\alpha
\end{eqnarray}
\end{prop}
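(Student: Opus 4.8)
The plan is to reduce hypothesis~\ref{cond:continuous} to a single scalar inequality and then exhibit the decomposition explicitly. Write $m_f := \min_{x\in S_1} f(x)$ and $m_g := \min_{y\in S_1} g(y)$, so that~\ref{cond:continuous} reads simply $m_f + m_g \ge 1$. Since $f$ and $g$ are densities on a segment of length $1$, their mean value equals $1$, which forces $0 \le m_f \le 1$ and $0 \le m_g \le 1$; these two bounds are exactly what will keep the mixing parameter $\alpha$ inside $[0,1]$, so I would record them at the outset.

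For the direction going from the decomposition to the inequality I would assume $f=(1-\alpha)r+\alpha$ and $g=\alpha s+1-\alpha$ with $r,s$ densities and $\alpha\in[0,1]$. A one-line integration confirms $f$ and $g$ are again densities, because the mixing weights sum to $1$; and since $r,s\ge 0$ one reads off the pointwise bounds $f\ge\alpha$ and $g\ge 1-\alpha$, hence $m_f\ge\alpha$ and $m_g\ge 1-\alpha$. Adding these yields $m_f+m_g\ge 1$, i.e. hypothesis~\ref{cond:continuous}.

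The constructive direction, from the inequality to the decomposition, carries the real content. Assuming $m_f+m_g\ge 1$, I would note that asking $\alpha\le m_f$ and $1-\alpha\le m_g$ simultaneously is the same as requiring $1-m_g\le\alpha\le m_f$, and that this interval is nonempty \emph{precisely} because $1-m_g\le m_f \Leftrightarrow m_f+m_g\ge 1$. The bounds $m_f\le 1$ and $m_g\le 1$ ensure $[1-m_g,m_f]\subseteq[0,1]$, so an admissible $\alpha$ exists. Fixing such an $\alpha$ in the interior, I would then set
$$r=\frac{f-\alpha}{1-\alpha},\qquad s=\frac{g-(1-\alpha)}{\alpha},$$
and check each is a genuine density: positivity follows from $f\ge m_f\ge\alpha$ and $g\ge m_g\ge 1-\alpha$, while $\int_0^1 r=\int_0^1 s=1$ follows from $\int_0^1 f=\int_0^1 g=1$ and the chosen denominators. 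Rearranging these two definitions recovers the announced identities.

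The only genuine obstacle is the degenerate case in which the admissible interval for $\alpha$ collapses to an endpoint $0$ or $1$, making a denominator above vanish. If one is forced to take $\alpha=1$, the constraint $\alpha\le m_f$ together with $m_f\le 1$ gives $m_f=1$; since $\int_0^1 f=1$ on a segment of length $1$, this means $f\equiv 1$, and the decomposition holds trivially with $f=(1-\alpha)r+1$ for any density $r$ and $s:=g$. The symmetric value $\alpha=0$ forces $g\equiv 1$ and is handled by $r:=f$. Separating these two boundary values from the generic case $\alpha\in(0,1)$ closes the argument.
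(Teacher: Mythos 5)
Your proof is correct and takes essentially the same route as the paper's: the same easy direction (read off $f\ge\alpha$, $g\ge 1-\alpha$ and add), and the same constructive formulas $r=\frac{f-\alpha}{1-\alpha}$, $s=\frac{g-(1-\alpha)}{\alpha}$ for the converse. The only difference is cosmetic and in your favor: where the paper fixes $\alpha=\min f$ and dismisses the endpoint cases $\alpha\in\{0,1\}$ as degenerate, you characterize the full admissible interval $[1-m_g,\,m_f]$ (whose nonemptiness is exactly hypothesis~\ref{cond:continuous}) and finish the boundary cases $f\equiv 1$ or $g\equiv 1$ explicitly.
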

\begin{proof}
~\\
Let us first suppose $(f,g)$ are under this form, then we notice, $\min f \geq \alpha$ as well as $\min g \geq 1-\alpha$, hence, $\min f + \min g \geq 1$ so that the condition is respected. 

Now, if the condition is respected, we define $\alpha = \min f$ and have $g\geq 1-\alpha$. If $\alpha = 0$ or $\alpha = 1$ then, respectively, $g$ or $f$ is uniform so that the corresponding coupling is independence and condition is degenerated. If not, $0< \alpha < 1$ and we can write: 
\begin{eqnarray*}
	f &=& (1-\alpha)\frac{f-\alpha}{1-\alpha} + \alpha \\
	\mbox{together with:}&&\\
	g &=& \alpha\frac{g-(1-\alpha)}{\alpha} + (1-\alpha)
\end{eqnarray*}
Quoting $r = \frac{f-\alpha}{1-\alpha}$ and $s = \frac{g-(1-\alpha)}{\alpha}$, we have $0\le r,s\le 1$ and $\int r = \int s = 1$. It precisely shows that $(r,s)$ is a couple of densities on $S_1$.
\end{proof}

Using the last proposition, we can easily build up a couple of margins on which an indetermination coupling is feasible. As mentioned during the proof, if $\alpha\in\{0,1\}$ then $f$ or $g$ is an uniform density for which indetermination and independence are completely equivalent. Hence we will exclude $\alpha\in\{0,1\}$ from most of our computations.

We repeatedly notice in the discrete case that properties of independence usually have a symmetric transcript for indetermination. In the continuous case, we know we can define an independence copula, operating on the cumulative distribution functions of the margins to generate the cumulative distribution function of an independence coupling. That property is not as common as one would expect, notably, an indetermination copula is out of existence. Let us properly explain where the difference comes from. 

\subsection{A specific indetermination copula}

We first remind the definition and properties of copulas as a classic way to couple two margins together with the well-known Sklar theorem (see~\cite{Sklar73}). 

\begin{mydef}[Copula]
~\\
\label{def:copula}
A copula $\mathbf{C}$ (in bold to distinguish from a coupling function) is a cumulative distribution function on $[0,1]^d$ ($d\in\mathbb{N}$) whose margins are uniforms. It is defined by three properties valid for any $u = (u_1,\ldots,u_d)$:
\begin{itemize}
\item $\mathbf{C}(u) = 0$ as soon as any $u_i$ is null
\item $\mathbf{C}(u) = u_i$ if $u_i$ is the only component different from $1$
\item $\mathbf{C}$ is $d-$non decreasing 
\end{itemize}
\end{mydef}

A copula is used to construct a coupling law by defining it with its cumulative distributive function while a coupling function (definition~\ref{def:continuouscoupling}) operates on densities; the close notations insist on their similarity while the type of function (density or cumulative distribution function) distinguish them. One could expect any coupling function to generate a corresponding copula. The transposition is not that easy: as we shall see it may depend on margins. 

The Sklar's theorem extracts and applies copulas to any probability law. More precisely, it indicates that any function $C$ satisfying the properties stated in Definition~\ref{def:copula} can be applied to any set of $d$ univariate cumulative functions $(F_1,\ldots,F_d)$ to generate a multivariate cumulative function whose margins will precisely be the $F_i$ respecting the formula:
$$F(x_1,\ldots,x_d) = \mathbf{C}(F_1(x_1),\ldots,F_d(x_d))$$
Reciprocally, any cumulative distribution function $F$ corresponds to an associated copula $\mathbf{C_F}$. Typically, when the margins $F_i$ of $F$ have a closed formula, we have
$$\mathbf{C_F}(u_1,\ldots,u_d) = F(F_1^{-1}(u_1),\ldots,F_d^{-1}(u_d))$$

\begin{rem}[$d-$ increasing]
~\\
The $d-$increasing property is closely tied to the positivity of the underlying probability law $\mathbb{P}$.
Hence, in dimension $2$, $\mathbb{P}(u_1\le U \le u_1', u_2\le U \le u_2') = \mathbf{C}(u_1,u_2) - \mathbf{C}(u_1,u_2') - \mathbf{C}(u1',u_2) + \mathbf{C}(u_1',u_2') \ge 0$. 
\end{rem}

\subsubsection{Extraction of the indetermination copula}
We begin by applying the Sklar's theorem. We do know that when $\mu$ and $\nu$ are fixed and fulfill hypothesis~\ref{cond:continuous}, we can couple them under indetermination operating on their densities; we keep the usual notations previously introduced for their densities as well as for their cumulative distribution functions.

With indetermination, we obtain a probability on the couple space whose density is given by: $\pi^+_{f,g} = h(x,y) = f(x)+g(y)-1,~\forall(x,y) \in S$ (remember we restricted ourselves to $S_1=[0,1]$).

Using the associated cumulative distribution function $\Pi^+_{F,G}$, we can express the associated copula. It is specific to the indetermination coupling (the way we generated $\pi^+_{f,g}$ hence $\Pi^+_{F,G}$) as well as to the margins $F$ and $G$ a priori. 
\begin{mydef}[Specific indetermination copula]\label{def:CopInd}
~\\
Given two margins $U\sim\mu,~(f,F)$ and $V\sim\nu,~(g,G)$, their specific indetermination copula, extracted from their indetermination coupling is given by:
\begin{equation}\label{eq:CopInd}
\mathbf{C_{F,G}^+}(u,v) = v*F^{-1}(u) + u*G^{-1}(v) - F^{-1}(u)*G^{-1}(v), ~\forall (u,v) \in S
\end{equation}
\end{mydef}

\begin{rem}[Extension]
~\\
$\mathbf{C_{F,G}^+}$ is defined by the margins, we shall insist on that later on. Yet, they are characterized by their densities, their cumulative distribution function as well as by their probability measure. Provided we can pass from one to another, we shall abusively speak of an indetermination copula for a couple of densities $(f,g)$ ($\mathbf{C^+_{f,g}}$) or a couple of probability measures $(\mu,\nu)$ ($\mathbf{C^+_{\mu,\nu}}$).
\end{rem}

\subsubsection{Example}
Let us introduce an example, $U$ and $V$ respectively follow the cumulative distribution function $F(x)= x^{\alpha}$ and $G(y) = y^{\beta}$. We first need to ensure that hypothesis~\ref{cond:continuous} is respected. Here it requires that the densities verify:
\begin{eqnarray*}
\min_x f + \min_y g -1 &\ge& 0\\
\min_x{\alpha*x^{\alpha-1}} + \min_y{\beta y^{\beta-1}} &\ge& 1\\
\textnormal{hypothesis~\ref{cond:continuous} leads to $\alpha$ and $\beta$ less than $1$ and }&&\\
\alpha + \beta &\ge& 1
\end{eqnarray*}
Under that last hypothesis, 
\begin{equation}\label{cop_ab}
\mathbf{C_{\alpha,\beta}^+}(u,v) = uv^{\frac{1}{\beta}} + u^{\frac{1}{\alpha}}v -u^{\frac{1}{\alpha}}v^{\frac{1}{\beta}}
\end{equation}

\begin{rem}[Dependence on margins]
~\\
A remark that we shall develop later: the presence of $\alpha$ (as well of $\beta$) inside the formula prevents it from being independent from the margins.
\end{rem}

So far, we did not define a copula of indetermination \textit{per se} but a specific copula of indetermination among those satisfying the formula of Definition~\ref{def:CopInd} for a given couple of margins.

\subsubsection{Dependence on margins}
We unfold the link between a specific indetermination copula and its margins. The copula is a way among many to couple two laws (maybe more than two but we limit ourselves). We already defined two of them through Definition~\ref{def:continuouscoupling} which operates on densities and by Definition~\ref{def:copula} which operates on cumulative distribution function. Replacing ourselves in the context of indetermination, it leads to formula~\ref{def:continuous+}, or to apply a specific indetermination copula given by formula~\ref{eq:CopInd} to a pair of cumulative distribution function. 

The second way for coupling exhibits a problem that we can easily isolate; it depends on margins as functions $F^{-1}$ and $G^{-1}$ appear in the formula used to mimic the application on densities to cumulative distribution function. We notice that the copula associated to an indetermination coupling depends on the margins we want to couple. It is not the case (for instance) when using an independence coupling. 

In general, given two margins, one can apply at least three coupling ways: 
\begin{enumerate}
\item define a density $\pi(x,y)$ verifying the margins (typically applying a coupling function $C$)
\item define a cumulative distribution function $\Pi(x,y)$ verifying the margins
\item define a copula $\mathbf{C}$ and apply it to the margins (verified by construction)
\end{enumerate}

In case of independence they correspond to $C^{\times}(u,v) = uv$, $\Pi^{\times}(x,y) = F(x)G(y)$ and $\mathbf{C}^{\times}(u,v) = uv$. It is a particular concept where any of the three coupling methods shall be defined without reference to the associated margins.

For indetermination, it doesn't work as $C(u,v) = u+v-1$, $\Pi^+(x,y) = xG(y)+yF(x)-xy$ but $\mathbf{C^+}(u,v) = u*F^{-1}(v) + v*G^{-1}(u) - F^{-1}(u)*G^{-1}(u)$ is highly dependent on margins. It is quite visible on the example~\ref{cop_ab} which takes various forms when $\alpha$ and $\beta$ vary. The only constraints of those two parameters being less than $1$ and with a sum greater than $1$.

Going back to independence (defined on density), we estimate why it is so specific. $\mathbf{C}$ being a copula, we suppose $u=F(x)$ and  $v=G(y)$ for more readiness. By definition, if we quote $\Pi$ the cumulative distribution function and $\pi$ the density:
\begin{align*}
\mathbf{C}(F(x),G(y)) &= \Pi(x,y)\\
\intertext{differentiating with respect to $x$ and $y$, we get:}\\
\frac{\partial \mathbf{C}(u,v)}{\partial x \partial y} &= \frac{\partial \Pi}{\partial x \partial y}\\
\frac{\partial u}{\partial x}\frac{\partial v}{\partial y}\frac{\partial C(u,v)}{\partial u \partial v} &= \pi(x,y)\\
f(x)g(y)\frac{\partial \mathbf{C}(u,v)}{\partial u \partial v} &= \pi(x,y)\\
\frac{\partial \mathbf{C}(u,v)}{\partial u \partial v} &= \frac{\pi(x,y)}{f(x)g(y)}
\end{align*}

Hence, independence coupling is quite peculiar as margins vanish: $\frac{fg}{fg}=1$ is the second derivative of the associated copula $\mathbf{C^{\times}}$ which is precisely independent of $F$ and $G$.

For an indetermination coupling, the crossed derivative of the copula $\mathbf{C^{+}}$ is: $\frac{f+g-1}{fg} = \frac{1}{g} + \frac{1}{f} -\frac{1}{fg}$ whose integration leads to formula~\ref{eq:CopInd} using $\frac{\partial F^{-1}(u)}{\partial u} = \frac{1}{F'(F^{-1}(u))} = \frac{1}{f(x)}$. It is dependent on margins and one cannot define a generic indetermination copula. We shall, so far, only define a specific indetermination copula $\mathbf{C^+_{F,G}}$

\begin{rem}
~\\
These last computations require  a division by $fg$. Let us go back to hypothesis~\ref{cond:continuous}. We know that $f+g-1\geq 0$ so that if there exists $x$ such that $f(x)=0$ then $\forall y,~g(y) \geq 1$ and, as $\int_{0}^1 g = 1$ (we are on segment $[0,1]$, if not we have to divide by the length as in Definition~\ref{def:continuous+}), we deduce $g=1$. It leads to a poorly interesting coupling: $f+g-1=f$. Moreover, the computations are actually still valid since $\mathbf{C_{F,G}^+} = yF^{-1}(x)$ whose crossed derivative is nothing but: $\frac{1}{f}$.
\end{rem}

\subsection{A local indetermination copula}
In this section we measure the dependence of $\mathbf{C^+_{F,G}}$ on its margins. The idea is to check whether a given specific copula of indetermination can create indetermination coupling when applied to another couple of margins than the one which defines it. The property~\ref{prop:copuleLocale} shows that it is locally possible.

To answer, let us start with four densities $f,g,r,s$ as well as their respective cumulative distribution functions $F,G,R,S$.

First step, we couple $(f,g)$ through indetermination using the indetermination coupling function of Definition~\ref{def:continuous+} and extract the associated specific copula of indetermination $\mathbf{C_{F,G}^+}$ using Definition~\ref{def:CopInd}. As already explained, it can be applied to any couple of margins, hence the second step.

Second step, we apply $\mathbf{C_{F,G}^+}$ to the margins defined by $(R,S)$, it leads to:
\begin{equation}
\label{eq:composed_copula}
\Pi^{f,g}_{r,s} = \mathbf{C^+_{F,G}}(R,S) = H
\end{equation}
We shall quote $H$ this cumulative distribution function throughout this section, it can be applied to any $(x,y)$ in $S$.

Third and last step, we compare $H$ to the coupling of $(r,s)$ under indetermination, namely $\mathbf{C^+_{R,S}}(R,S)$.

\bigbreak

Since we do not expect a general equality it implies at least some requirements on $(F,G,R,S)$ so that $H$ represents a coupling of indetermination. Obviously, if $(F,G) = (R,S)$, the third step is trivial as both functions are equal. The natural question being: are there any other possibilities? The answer is yes, provided that the requirements expressed below are satisfied.

\begin{prop}[A local indetermination copula]\label{prop:copuleLocale}~\\
Given $f,g,r,s$, four densities and $(R,G,R,S)$ the application of $\mathbf{C^+_{F,G}}$ to $(R,S)$ quoted $H$ (see equation~\ref{eq:composed_copula}) generates an indetermination coupling if and only if
it exists a $\lambda\geq 0$ respecting:
\begin{equation}\label{eq:lambda}
\frac{max(g)}{max(g)-1}\geq \lambda \geq 1 - \frac{1}{max(f)}
\end{equation}

such that the following equations are satisfied:
\begin{eqnarray}\label{eq:linkedMargins}
F^{-1}(x)-x &=& \lambda(R^{-1}(x)-x)\\
G^{-1}(y)-y &=& \frac{1}{\lambda}(S^{-1}(y)-y)
\end{eqnarray}

\end{prop}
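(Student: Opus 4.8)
The plan is to turn the stated equivalence into a single functional equation in the four quantile functions and then solve it by separation of variables. Since $\mathbf{C^+_{F,G}}$ is a copula, Sklar's theorem tells us that $H$ has margins $R$ and $S$; consequently ``$H$ generates an indetermination coupling'' means exactly $H=\Pi^+_{R,S}$, the (unique) indetermination cumulative distribution function of $(r,s)$, whose mixed derivative is $r(x)+s(y)-1$. So the whole statement reduces to characterising when
\[
H(x,y)=S(y)\,F^{-1}(R(x))+R(x)\,G^{-1}(S(y))-F^{-1}(R(x))\,G^{-1}(S(y))
\]
coincides with $y R(x)+x S(y)-xy$ for every $(x,y)\in S$.

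First I would introduce the displacement functions $a(x)=F^{-1}(R(x))-x$ and $b(y)=G^{-1}(S(y))-y$, substitute them into $H=\Pi^+_{R,S}$, and cancel the common terms; a short computation collapses the identity to $a(x)\,(S(y)-y)+b(y)\,(R(x)-x)-a(x)b(y)=0$. Then I would pass to the quantile variables $u=R(x)$, $w=S(y)$ and set $\phi(u)=F^{-1}(u)-R^{-1}(u)$, $\psi(w)=G^{-1}(w)-S^{-1}(w)$, $\rho(u)=R^{-1}(u)-u$, $\sigma(w)=S^{-1}(w)-w$, so that the relation becomes the symmetric $\phi\sigma+\rho\psi+\phi\psi=0$. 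Rewriting it as $\phi(u)\,(G^{-1}(w)-w)+\rho(u)\,\psi(w)=0$ exhibits a product structure: the ratio $\phi(u)/\rho(u)$ equals $-\psi(w)/(G^{-1}(w)-w)$, a quantity depending only on $u$ on one side and only on $w$ on the other, hence a constant which I name $\lambda-1$. Unwinding the two resulting proportionalities gives precisely $F^{-1}(x)-x=\lambda(R^{-1}(x)-x)$ and $G^{-1}(y)-y=\tfrac1\lambda(S^{-1}(y)-y)$, which is the ``only if'' direction.

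For the converse I would substitute these two relations back into $\phi\sigma+\rho\psi+\phi\psi=0$, factor out $\rho(u)\sigma(w)$, and verify that the scalar bracket $(\lambda-1)+(\tfrac1\lambda-1)+(\lambda-1)(\tfrac1\lambda-1)$ vanishes identically; this is the clean algebraic identity that makes the construction work. The admissible range of $\lambda$ then comes from demanding that the induced objects be genuine densities: differentiating $R^{-1}=\tfrac1\lambda F^{-1}+(1-\tfrac1\lambda)\,\mathrm{id}$ and $S^{-1}=\lambda G^{-1}+(1-\lambda)\,\mathrm{id}$ and imposing $(R^{-1})'\ge0$ and $(S^{-1})'\ge0$ (nonnegativity of $r$ and $s$) turns, via $(F^{-1})'=1/f(F^{-1})$ and $(G^{-1})'=1/g(G^{-1})$, into $\lambda\ge 1-1/\max f$ and $\lambda\le \max g/(\max g-1)$ respectively; note that $\max f,\max g\ge1$ automatically forces $\lambda\ge0$.

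I expect the main obstacle to be the separation-of-variables step, and more precisely the bookkeeping of the zero sets: the argument divides by $\rho(u)$ and by $G^{-1}(w)-w$, which vanish identically exactly in the degenerate uniform cases ($r$ or $g$ uniform, i.e. the excluded $\alpha\in\{0,1\}$). I would therefore first dispatch those degenerate cases separately, where indetermination and independence coincide, and in the non-degenerate case fix a base point $u_0$ with $\rho(u_0)\neq0$ to propagate the constant ratio to all of $[0,1]$, keeping in mind the usual monotonicity and left-continuity conventions for $F^{-1},G^{-1},R^{-1},S^{-1}$. A secondary point to watch is that I have extracted the $\lambda$-bounds from nonnegativity of $r$ and $s$ only; I would confirm that the analogous requirements on $f$ and $g$, together with hypothesis~\ref{cond:continuous} for both pairs, are compatible with the same $\lambda$, so that the displayed interval is exactly the feasibility range rather than merely a necessary one.
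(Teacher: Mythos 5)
Your reduction of the statement to the functional identity $a(x)\bigl(S(y)-y\bigr)+b(y)\bigl(R(x)-x\bigr)-a(x)b(y)=0$, the separation of variables yielding $F^{-1}-\mathrm{id}=\lambda(R^{-1}-\mathrm{id})$ and $G^{-1}-\mathrm{id}=\tfrac1\lambda(S^{-1}-\mathrm{id})$, and the algebraic verification of the converse are sound, and they follow a genuinely different route from the paper's: you work entirely at the level of cumulative distribution functions and quantile transforms, whereas the paper (appendix~\ref{sec:proofCopuleLocale}) differentiates $H$ twice, imposes that the mixed density equal $r(x)+s(y)-1$, separates variables at the density level, and only then integrates back to the quantile relations. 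Your route avoids the divisions by $r(x)s(y)$ and by $1-1/f$ that the paper must justify in dedicated remarks, at the cost of the zero-set bookkeeping you correctly flag for $\rho(u)$ and $G^{-1}(w)-w$.

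The genuine gap is in the final step, where you obtain the interval~\ref{eq:lambda} and assert that $\lambda\ge 0$ is automatic. Translating $(R^{-1})'\ge 0$ into $\lambda\ge 1-1/\max f$ (and $(S^{-1})'\ge 0$ into $\lambda\le \max g/(\max g-1)$) is only legitimate if you already know $\lambda>0$: the inequality $(R^{-1})'=\tfrac1\lambda\bigl(\tfrac{1}{f(F^{-1})}-1\bigr)+1\ge 0$ gets multiplied by $\lambda$, and the direction of the resulting bound depends on the sign of $\lambda$. For $\lambda<0$ the same monotonicity requirement produces the different window $1-\tfrac{1}{\min f}\ge\lambda\ge\tfrac{\min g}{\min g-1}$ (the paper's inequality~\ref{C_neg}), and a short computation shows this window is nonempty exactly when $\min f+\min g\ge 1$, i.e.\ precisely under the standing hypothesis~\ref{cond:continuous} on $(f,g)$. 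So ``$\max f,\max g\ge1$ automatically forces $\lambda\ge0$'' is circular, not a proof: nonnegativity of $r$ and $s$ alone cannot rule out negative $\lambda$. This is exactly where the paper spends its two remaining subsections (\ref{sssec:positive} and~\ref{sssec:transfer}): it expresses $r$ and $s$ in terms of $f$, $g$, $\lambda$, injects them into the positivity requirement $r(x)+s(y)-1\ge 0$ demanded of an indetermination density, and extracts a quadratic constraint in $\lambda$ that it uses to discard $\lambda\le 0$ and to show that admissibility transfers from $(f,g)$ to $(r,s)$. Your proposal has no counterpart to this argument, so as written the ``only if'' direction of the equivalence --- the claim that an indetermination coupling forces a \emph{nonnegative} $\lambda$ in the interval~\ref{eq:lambda} --- is not established; your closing remark about compatibility of the two pairs' constraints gestures at the issue but does not identify that the sign of $\lambda$ is what is at stake.
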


\begin{rem}
~\\
We notice that the cumulative distribution functions sharing the same specific indetermination copula are those whose inverse function is inside the convex eligible space (coefficient respecting equation~\ref{eq:lambda}) of $F^{-1}$ or $G^{-1}$ and of the identity.
\end{rem}

\begin{proof}
~\\
See appendix~\ref{sec:proofCopuleLocale}.
\end{proof}

\begin{rem}[Transfer of the condition]
~\\
Inequality~\ref{ine_trans} which appears in the proof is quite remarkable: it expresses that having $(f,g)$ respecting hypothesis~\ref{cond:continuous} automatically means that $(r,s)$ also does, provided that $\lambda$ respects equation~\ref{eq:lambda}.
\end{rem}

In the proposition~\ref{prop:copuleLocale} we used two couples of cumulative distribution functions $((F,G),(R,S))$ representing margins and expressed a condition for the specific indetermination copula of the first couple $(F,G)$ to generate indetermination when we apply it to the second $(R,S)$; formally:
$$\mathbf{C^{+}_{F,G}}(R,S) = \mathbf{C^{+}_{R,S}}(R,S)$$
in summary: both copula agree on one point. It does not demonstrate that they are equal. 

Proposition~\ref{prop:uniqueness} completes the previous proposition~\ref{prop:copuleLocale} and ensures the copula are the same. 

\begin{prop}[Shared Indetermination Copula]
\label{prop:uniqueness}
~\\
The hypotheses of proposition~\ref{prop:copuleLocale} apply if and only if $\mathbf{C^+_{F,G}}=\mathbf{C^+_{R,S}}$.
\end{prop}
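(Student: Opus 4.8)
The plan is to establish the equivalence by proving each implication with a different tool: the direction ``copula equality $\Rightarrow$ hypotheses'' will follow almost for free from Proposition~\ref{prop:copuleLocale} read backwards, whereas ``hypotheses $\Rightarrow$ copula equality'' rests on a direct algebraic identity. The guiding remark is that the linked-margin relations~\ref{eq:linkedMargins} present $F^{-1}$ and $G^{-1}$ as affine perturbations of the identity by $R^{-1}$ and $S^{-1}$ with reciprocal coefficients $\lambda$ and $1/\lambda$, and that feeding these into the copula formula of Definition~\ref{def:CopInd} makes every $\lambda$-dependent contribution cancel.

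For ``hypotheses $\Rightarrow$ equality'', I would rewrite~\ref{eq:linkedMargins} as $F^{-1}(u) = u + \lambda\bigl(R^{-1}(u)-u\bigr)$ and $G^{-1}(v) = v + \tfrac1\lambda\bigl(S^{-1}(v)-v\bigr)$, set $a=R^{-1}(u)$ and $b=S^{-1}(v)$, and substitute into
$$\mathbf{C^+_{F,G}}(u,v) = v\,F^{-1}(u) + u\,G^{-1}(v) - F^{-1}(u)\,G^{-1}(v).$$
Expanding and sorting the result by the monomials $ab$, $av$, $ub$, $uv$, the coefficients of $ab$, $av$ and $ub$ collapse to $-1$, $1$ and $1$, reproducing $\mathbf{C^+_{R,S}}$, while the coefficient of $uv$ equals $2-\lambda-\tfrac1\lambda+\tfrac{(1-\lambda)^2}{\lambda}$, which is identically $0$. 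Hence $\mathbf{C^+_{F,G}}(u,v)=v\,R^{-1}(u)+u\,S^{-1}(v)-R^{-1}(u)\,S^{-1}(v)=\mathbf{C^+_{R,S}}(u,v)$ holds at \emph{every} $(u,v)\in S$, upgrading the pointwise agreement of Proposition~\ref{prop:copuleLocale} to a genuine identity of functions.

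For the converse, I would exploit that a specific indetermination copula composed with its own margins returns the indetermination distribution: $\mathbf{C^+_{R,S}}(R(x),S(y)) = x\,S(y)+y\,R(x)-xy$. Thus if $\mathbf{C^+_{F,G}}=\mathbf{C^+_{R,S}}$ as functions, then evaluating at $(R(x),S(y))$ gives $H = \mathbf{C^+_{F,G}}(R,S)$ (equation~\ref{eq:composed_copula}) equal to that same indetermination distribution, which is exactly the assertion that $H$ generates an indetermination coupling. The ``only if'' part of Proposition~\ref{prop:copuleLocale} then supplies a $\lambda\ge0$ satisfying~\ref{eq:lambda} together with~\ref{eq:linkedMargins}, i.e.\ the hypotheses hold; this direction therefore requires no new computation.

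I expect the substitution in the forward direction to be the only real work, and its crux to be the vanishing of the $uv$-coefficient: this is precisely where the reciprocity of the scalings $\lambda$ and $1/\lambda$ is indispensable, and it is what turns ``agreement on the composed locus $(R,S)$'' into agreement on all of $[0,1]^2$. The remaining care concerns the degenerate regime---$\lambda$ at an endpoint of~\ref{eq:lambda}, or uniform $f$ or $g$---where the bounds degenerate and indetermination merges with independence; these were already excluded from the interesting setting and can be handled by inspection.
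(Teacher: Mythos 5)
Your proof is correct and takes essentially the same route as the paper's: your forward direction is the same substitution of the relations~\ref{eq:linkedMargins} into the formula of Definition~\ref{def:CopInd} (your monomial bookkeeping, with the $uv$-coefficient $2-\lambda-\tfrac{1}{\lambda}+\tfrac{(1-\lambda)^2}{\lambda}$ vanishing identically, is just a reorganization of the paper's chain of equalities ending in $uv-\bigl[R^{-1}(u)-u\bigr]\bigl[S^{-1}(v)-v\bigr]$), and your converse is exactly the paper's one-line argument of specializing the functional equality at $(R,S)$ and invoking Proposition~\ref{prop:copuleLocale}. No gaps to report.
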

\begin{proof}
~\\
We begin with a simple remark, if for two couples $((f,g),(r,s))$ we have $\mathbf{C^+_{f,g}} = \mathbf{C^{+}_{r,s}}$ then, in particular, $\mathbf{C^{+}_{f,g}}(r,s) = \mathbf{C^{+}_{r,s}}(r,s)$, hence, proposition~\ref{prop:copuleLocale} applies and its hypotheses are satisfied. 

Now, if it exists a $\lambda$ as defined in the quoted proposition linking $F^{-1}$ to $R^{-1}$ and $G^{-1}$ to $S^{-1}$, we use it to express $\mathbf{C_{f,g}}$ as a function of $(r,s)$:
\begin{eqnarray*}
\mathbf{C^{+}_{f,g}}(u,v) &=& vF^{-1}(u) + uG^{-1}(v) -F^{-1}(u)G^{-1}(v)\\
&=& v\left[u + \lambda(R^{-1}(u)-u)\right] + uG^{-1}(v) - \left[u + \lambda(R^{-1}(u)-u)\right]G^{-1}(v)\\
&=& v\left[u + \lambda(R^{-1}(u)-u)\right] - \left[\lambda(R^{-1}(u)-u)\right]G^-1(v)\\
&=& v\left[u + \lambda(R^{-1}(u)-u)\right] - \left[\lambda(R^{-1}(u)-u)\right]\left[v+ \frac{1}{\lambda}(S^{-1}(v)-v)\right]\\
&=& uv - \left[R^{-1}(u)-u\right]\left[S^{-1}(v)-v\right]\\
&=& \mathbf{C^{+}_{r,s}}(u,v)
\end{eqnarray*}
It completes the proof. 
\end{proof}

\subsection{Maximal spread between the two copulas}
Let us estimate the difference between the two copulas we just introduced. As the indetermination copula is defined locally, we compute it for any suited couple of margins $(F,G)$ before trying to maximize it. The expression under the $\mathbb{L}^1$ norm amounts to:

\begin{eqnarray*}
\Delta_1(F,G) = ||\mathbf{C^{\times}}-\mathbf{C^{+}_{F,G}}||_1 &=&\int_{u=0}^1 \int_{v=0}^1 \left|uv - vF^{-1}(u) - uG^{-1}(v) +F^{-1}(u)G^{-1}(v) \right| \diff u \diff v\\
&=& \int_{u=0}^1 \int_{v=0}^1 \left|(F^{-1}(u)-u)(G^{-1}(v)-v)\right| \diff u \diff v\\
&=& \int_{u=0}^1 |F^{-1}(u)-u|\diff u \int_{v=0}^1 |G^{-1}(v)-v| \diff v
\end{eqnarray*}

We have a closed formula of the difference between the two couplings (the common choice of the $\mathbb{L}^1$ norm shall be motivated later). But we already know the value is null in case any of the two margins is uniform, we do not know neither the maximum, nor for which couple it is realized. Both questions are solved within Property~\ref{prop:maximalL1Diff}.

\begin{prop}
~\\
\label{prop:maximalL1Diff}
Quoting $\mathcal{E}^+$ the set of couples of cumulative distribution functions $(F,G)$ such that their densities couple $(f,g)$ respects condition~\ref{cond:continuous} we have:
$$\max_{(F,G)\in \mathcal{E}^+}{\Delta_1}(F,G) = \frac{1}{16}$$
Moreover, the couple of densities corresponding to the maximum is:
$$(f_0,g_0) = \left(u\mapsto\frac{1}{2}(1 + \delta_{u=0}),~v\mapsto\frac{1}{2}(1 + \delta_{v=0})\right)$$
\end{prop}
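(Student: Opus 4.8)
The plan is to exploit the product structure already exhibited for $\Delta_1$. Writing $\Phi(F) := \int_0^1 |F^{-1}(u)-u|\diff u$, the excerpt shows $\Delta_1(F,G) = \Phi(F)\,\Phi(G)$, so it suffices to bound each factor separately in terms of the corresponding margin and then optimize under the coupling constraint~\ref{cond:continuous}. The first move is the \emph{reflection identity}
$$\int_0^1 |F^{-1}(u)-u|\diff u = \int_0^1 |F(x)-x|\diff x,$$
valid because the graph of $F^{-1}$ is the mirror image of the graph of $F$ across the diagonal $y=x$, so the area each encloses with the diagonal is the same (a jump of $F$ becomes a flat of $F^{-1}$ and conversely, which preserves the area). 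This rewriting is what makes the affine structure of Proposition~\ref{prop:constructiveAlpha} usable, since that structure lives on $F$ rather than on $F^{-1}$.

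Next I would bound $\Phi(F)$ by the minimum of the density. Setting $\alpha = \min f$, Proposition~\ref{prop:constructiveAlpha} gives $f = \alpha + (1-\alpha)r$ for some density $r$, hence $F = \alpha\,\mathrm{id} + (1-\alpha)R$ and $F(x)-x = (1-\alpha)(R(x)-x)$. Therefore $\Phi(F) = (1-\alpha)\int_0^1 |R(x)-x|\diff x$. The core lemma is then the universal bound $\int_0^1 |R(x)-x|\diff x \le \tfrac12$ for every c.d.f.\ $R$ on $[0,1]$. I would obtain it by noting that $R\mapsto \int_0^1 |R(x)-x|\diff x$ is convex in the underlying probability measure (the map is affine in the measure and $t\mapsto |t-x|$ is convex) and continuous for weak convergence; by Bauer's maximum principle its maximum over the compact convex set of probability measures on $[0,1]$ is attained at an extreme point, i.e.\ a Dirac mass $\delta_c$, for which $\int_0^1 |\mathbf{1}[x\ge c]-x|\diff x = \tfrac12\bigl(c^2+(1-c)^2\bigr)\le \tfrac12$, with equality at $c\in\{0,1\}$. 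This yields $\Phi(F)\le \tfrac12(1-\min f)$ and, symmetrically, $\Phi(G)\le \tfrac12(1-\min g)$.

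It then remains to combine the two bounds with the admissibility constraint. For $(F,G)\in\mathcal{E}^+$ we have $\min f + \min g \ge 1$; putting $a = 1-\min f$ and $b = 1-\min g$, both in $[0,1]$, this reads $a+b\le 1$, so AM--GM gives $ab \le (a+b)^2/4 \le \tfrac14$. Consequently
$$\Delta_1(F,G) = \Phi(F)\Phi(G) \le \tfrac14\,ab \le \tfrac{1}{16}.$$
For sharpness I would run the equality cases backwards: $\int|R-x| = \int|S-x| = \tfrac12$ forces $R,S$ to be Dirac at $0$ (taking $c=0$), while $ab=\tfrac14$ forces $\min f = \min g = \tfrac12$. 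This pins down exactly $f_0 = g_0 = \tfrac12(1+\delta_{u=0})$, for which $F_0(x)=\tfrac12+\tfrac12 x$, $F_0(x)-x = \tfrac12(1-x)$ and $\Phi(F_0)=\tfrac14$, realizing $\Delta_1 = \tfrac1{16}$.

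I expect the main obstacles to be threefold. First, the reflection identity must be justified for margins with atoms and flats, precisely because the optimizer $f_0$ carries a Dirac mass. Second, the bound $\int_0^1|R(x)-x|\diff x\le\tfrac12$ genuinely needs the monotonicity of $R$: the naive pointwise estimate $|R(x)-x|\le\max(x,1-x)$ only integrates to $\tfrac34$, which is why I route through convexity and extreme points rather than a pointwise argument. Third, the value $\tfrac1{16}$ is attained only in the closure of the class of genuine densities, since $f_0$ has an atom; strictly, the maximum should be read over $\mathcal{E}^+$ enlarged to allow such atomic margins, or equivalently as a supremum approached along mollifications of $f_0$.
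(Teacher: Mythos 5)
Your proof is correct, and its skeleton is the same as the paper's: factor $\Delta_1(F,G)$ into two one-dimensional integrals, use the affine decomposition $f=\alpha+(1-\alpha)r$ of Proposition~\ref{prop:constructiveAlpha} to pull out a factor $(1-\alpha)$, bound the remaining integral by $\tfrac12$, and maximize $\alpha(1-\alpha)/4$. The differences are in execution, and each buys something. Where you invoke the reflection identity $\int_0^1|F^{-1}(u)-u|\diff u=\int_0^1|F(x)-x|\diff x$, the paper instead changes variables to get $\int_0^1|F(u)-u|f(u)\diff u$, which after decomposition leaves the convex combination $(1-\alpha)\bigl[\alpha\int|R-\mathrm{id}|+(1-\alpha)\int|R^{-1}-\mathrm{id}|\bigr]$; your identity collapses those two integrals into one and shortens the algebra. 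More substantively, the paper simply asserts the resulting bound ``$\le(1-\alpha)/2$'', i.e.\ it uses the universal estimate $\int_0^1|R(x)-x|\diff x\le\tfrac12$ without proof; your Bauer/extreme-point argument supplies exactly this missing lemma, and your remark that the naive pointwise bound only integrates to $\tfrac34$ correctly identifies why monotonicity of $R$ must enter. Third, you decouple the margins ($a=1-\min f$, $b=1-\min g$, $a+b\le1$, then AM--GM) where the paper keeps them tied through the single shared parameter $\alpha$; the optimizations are equivalent, but yours makes transparent that only the constraint $\min f+\min g\ge1$ matters. Finally, you are right to flag attainment: $(f_0,g_0)$ carries atoms, so within genuine densities the value $\tfrac{1}{16}$ is a supremum approached by mollification, a point the paper passes over silently. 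One caveat on your equality analysis: since Dirac mass at $1$ also achieves $\int_0^1|\mathbf{1}[x\ge c]-x|\diff x=\tfrac12$, the maximizer is not unique (an atom at $1$ instead of $0$, in either margin, also attains $\tfrac{1}{16}$), so ``pins down exactly'' overstates the conclusion --- though this is a defect of the proposition's own statement as much as of your proof.
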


\begin{proof}
~\\
Working with $(F,G)\in\mathcal{E}^+$ is fairly inconvenient as it adds hypothesis on $f$ that we have to convey to $F^{-1}$ through $F$ and similarly with $G$. 
To get rid of it, we use proposition~\ref{prop:constructiveAlpha} and extract $\alpha$ together with two densities $r,s$ (respectively two cumulative distribution functions $(R,S)$). Finally, we obtain: 
$$F(u) = \alpha u + (1-\alpha) R(u)$$
as well as 
$$G(v) = (1-\alpha) v + \alpha S(v)$$

Let us report those functions in $\Delta_1(F,G)$:
\begin{eqnarray*}
\Delta_1(F,G) &=& \int_{u=0}^1 \left|F^{-1}(u)-u\right|\diff u \int_{v=0}^1 \left|G^{-1}(v)-v\right|\diff v\\
&=& \int_{u=0}^1 \left|F(u)-u\right|f(u)\diff u \int_{v=0}^1 \left|G(v)-v\right|g(v) \diff v
\end{eqnarray*}

The expression of $\Delta_1$ falsely separates $F$ and $G$: they are linked one to another using the constant $\alpha$. Though, we can maximize each integral on its own, while the respect of condition defining $\mathcal{E}^+$ will appear after. We run the computations on the "$F$-part" of $\Delta_1$:

\begin{eqnarray*}
I_f &=& \int_{u=0}^1 \left|F(u)-u\right|f(u)\diff u \\
&=& \int_{u=0}^1 \left|\alpha u + (1-\alpha)R(u)-u\right|f(u)\diff u \\
&=& (1-\alpha) \int_{u=0}^1 \left|R(u)-u\right|(\alpha + (1-\alpha)r(u))\diff u \\
&=& (1-\alpha)\left[ \alpha \int_{u=0}^1 |R(u)-u|\diff u + (1-\alpha)\int_{u=0}^1 |R^{-1}(u)-u|\diff u\right]\\
&\le & \frac{(1-\alpha)}{2}
\end{eqnarray*}

We immediately derive, conducting the same analysis on $G$:
$$\Delta_1(F,G) \le \frac{(1-\alpha)\alpha}{4} \le \frac{1}{16} \mbox{ reached for $\alpha=\frac{1}{2}$} $$

Besides, it turns out that it becomes and equality if we choose $(F,G)$ as proposed in the property, namely $f(u) = \frac{1+\delta_{u=0}}{2}$ which finishes the proof. 
\end{proof}

\begin{rem}[Transposition of the discrete case]
~\\
The couple $(f_0,g_0)$ which emphasizes most the difference between independence and indetermination appears to be the natural transposition of the discrete case. Indeed, for contingency values, each probability is $\frac{1}{2p} + \frac{1}{2} \delta_{u=0}$ where $p$ is the number of values $u$.
\end{rem}

\begin{rem}[Motivation of the $L_1$-norm]
~\\
Application of the Scheffé's Lemma enables us to convert the norm we used in the above property to the $L_{\infty}$-norm. Formally, we also have, $\forall (F,G)\in \mathcal{E}^+, \forall (U,V) \in \mathcal{B}([0,1])^2$:
$$\left| \int_{u\in U}\int_{v\in V} \mathbf{C^+_{F,G}}(u,v) \diff u \diff v - \int_{u\in U}\int_{v\in V} \mathbf{C^{\times}}(u,v) \diff u \diff v\right| \le \frac{1}{16}$$
\end{rem}

\begin{rem}[Discrete case]
~\\
In a precedent paper, we showed a similar result in the discrete case. Given two marginal laws $\mu$ and $\nu$ uniformly drawn inside the set of probability laws on $p$ and $q$ elements respectively, the expected $L^2$ norm between an indetermination and an independence coupling is less than $\frac{1}{pq}$.
\end{rem}

\subsection{Conclusions about an indetermination (or indeterminacy) copula}
Through Definition~\ref{def:CopInd}, we introduced an indetermination copula $\mathbf{C^+_{R,G}}$ specific to a given couple of margins $(F,G)$. We also showed that dependence on margins prevents us from defining a general indetermination copula: we end up with a collection of copulas parametrized by a couple of eligible margins (any respecting hypothesis~\ref{cond:continuous}). 

A priori, the association $(F,G)\mapsto \mathbf{C_{F,G}^+}$ has no reason to be injective and it is not. Precisely, a couple $(F,G)$ defines the same indetermination copula as another $(R,S)$ if and only if the four cumulative distribution functions respect the two equations~\ref{eq:linkedMargins}. It basically requires that each inverse cumulative distribution is a linear composition of the other cumulative distribution function and of the identity. On that segment of couples of cumulative distribution functions defined with an eligible $\lambda$, all the specific indetermination copulas are identical.

From a specific indetermination copula, we built up an indetermination copula that shall be defined and applied to a segment of couples of cumulative distribution functions and we also showed that it cannot be further extended. 

\section{Indetermination and average likelihood~\label{sec:likelihood}}

Given a set of $n$ realizations of a variable $W$, a common problem is to determine the underlying probability law $P_{\pi}$ that is supposed unknown and that we shall abusively quote by its corresponding density $\pi$. It often comes with a subset $\Omega\subset\Lambda$ of probability laws among which we search for the most approaching one. Here we also suppose that $P_{\pi}\in\Lambda$ applies to a product space $S$ as previously defined. 

A usual method is the Maximum Likelihood Paradigm (MLP), under which for a given realization $W=w$ and for any probability law $P_{\pi'}$ in $\Omega$, we compute the probability that $W'\sim P_{\pi'}$ belongs to $[w,w+\diff w]$. 

$$\mathbb{P}_{\pi'}(W'\in[w,w+\diff w]) = \pi'(w)\diff w$$

If we integrate on the values $W$ can take, we obtain the average likelihood $\overline{L}(\pi,\pi')$ of $W$ under $\pi'$:
\begin{mydef}[Average likelihood]
~\\
Given two probability laws $P_{\pi}$ and $P_{\pi'}$ whose densities are $\pi$ and $\pi'$ respectively we define the average likelihood between $P_{\pi}$ and $P_{\pi'}$ or abusively between $\pi$ and $\pi'$ as:
\begin{equation}
\label{eq:likelihood}
\overline{L}(P_{\pi},P_{\pi'}) = \overline{L}(\pi,\pi') = \mathbb{E}_{W\sim P_{\pi}}(\pi'(W))
\end{equation}
\end{mydef}

\begin{rem}
~\\
Finally, $\overline{L}(P_{\pi},P_{\pi'})$ rewrites $\int_{S}\pi(s)\pi'(s)\diff s$ with $S$ the set where $P_{\pi}$ and $P_{\pi'}$ take their values (non necessarily a product space). It immediately appears symmetric. 
\end{rem}

Now, motivated by the discrete properties of $\pi^+$ (see~\cite{bertrand:hal-03086553}), we link $\overline{L}$ with a discrete notion of couple matching. 

Let us go back to the discrete case, and select $\pi$ among probability laws defined on a product space of $pq$ elements with fixed margins $\mu$ and $\nu$. Then, if $W\sim\pi$, the probability of a couple matching, that is to say the case where two independent realizations $W_1=(U_1,V_1)$ and $W_2=(U_2,V_2)$ are equal, is minimal when $\pi = \pi^+ = C^+(\mu,\nu)$. 

Similarly, in the continuous case, we expect $\overline{L}(\pi,\pi)$ to be minimal when $\pi=\pi^+$ and this is precisely what happens since it amounts to computing the cost function of problem~\ref{pb:continous}:

\begin{equation}
\label{eq:costL}
\overline{L}(\pi,\pi) = \int_S \pi^2 (w)\diff w
\end{equation}

Furthermore, in the discrete case a couple matching between $\pi^+$ and a second probability law $\pi$ only depends upon the margins of $\pi$. We show this property stays true in a continuous domain in the next section.

\subsection{Indetermination prevents correlation extraction}
Going back to the definition of average likelihood given in Equation~\ref{eq:likelihood}, we suppose $\pi$ takes the form $\pi^+$ given in \ref{def:continuous+} and compute its average likelihood with any $P_h\in\Lambda$:
\begin{eqnarray}
\overline{L}(h,\pi^+) &=& \mathbb{E}_{W\sim \pi^+}(h(W))\nonumber\\
&=& \int_{x=0}^1\int_{y=0}^1 \pi^+(x,y)h(x,y)\diff x \diff y\nonumber\\
&=& \int_{x=0}^1\int_{y=0}^1 \left(f(x)+g(y)-1\right)h(x,y)\diff x \diff y\nonumber\\
&=& \int_{x=0}^1 f(x)\partial_1 P_h(x)\diff x + \int_{y=0}^1 g(y)\partial_2 P_h(y)\diff y - 1 \label{eq:likelihood+}
\end{eqnarray}
where $\partial_1 P_h$ corresponds to the density of the first margin of $P_h$ and $\partial_2 P_h$ for the second margin.

The equality of equation~\ref{eq:likelihood+} proves that the average likelihood between $h$ and $\pi^+$ only depends essentially on the four underlying margins. Furthermore leveraging on this decomposition we can derive the Theorem~\ref{th:MLE+}

\begin{theorem}
\label{th:MLE+}
~\\
Given two densities $f$ and $g$ on $S_1$, we quote $\Omega_{f,g}$ the set of densities on $S$ with margins $(f,g)$. Among $\Omega_{f,g}$, $\pi^+$ realizes the optimum of:
\begin{equation}
\min_{\pi\in\Omega_{f,g}} \overline{L}(\pi,\pi);
\end{equation}
and, for any $\pi_0$ in $\Omega_{f,g}$ we have:
\begin{equation}
\min_{\pi\in\Omega_{f,g}} \overline{L}(\pi,\pi) =  \overline{L}(\pi^+,\pi_0).
\end{equation}
\end{theorem}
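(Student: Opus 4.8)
The plan is to treat the two claims separately, since each reduces almost immediately to a result already established in the paper. The first claim, that $\pi^+$ minimizes $\overline{L}(\pi,\pi)$ over $\Omega_{f,g}$, is essentially a reformulation of Proposition~\ref{prop:minL2}. Indeed, by Equation~\ref{eq:costL} the self-likelihood is $\overline{L}(\pi,\pi)=\int_S \pi^2(w)\diff w$, so minimizing $\overline{L}(\pi,\pi)$ subject to $\pi\in\Omega_{f,g}$ (that is, $\pi\geq 0$, $\int\pi=1$, and prescribed margins $f,g$) is verbatim Problem~\ref{pb:continous}. First I would invoke Proposition~\ref{prop:minL2}, which identifies the minimizer as $\pi^+_{f,g}(x,y)=f(x)+g(y)-1$, valid under hypothesis~\ref{cond:continuous}; the latter guarantees $\pi^+\geq 0$ and hence $\pi^+\in\Omega_{f,g}$. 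This settles the first equality and, in passing, shows that $\Omega_{f,g}$ is nonempty.

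For the second claim I would exploit the decomposition of Equation~\ref{eq:likelihood+}, which computes the average likelihood of an arbitrary $P_h\in\Lambda$ against $\pi^+$ as $\overline{L}(h,\pi^+)=\int_0^1 f(x)\,\partial_1 P_h(x)\diff x+\int_0^1 g(y)\,\partial_2 P_h(y)\diff y-1$, a quantity depending on $h$ only through its two margins. Specializing to $h=\pi_0$ with $\pi_0\in\Omega_{f,g}$, the margins of $\pi_0$ are by definition $f$ and $g$, so $\partial_1 P_{\pi_0}=f$ and $\partial_2 P_{\pi_0}=g$, whence $\overline{L}(\pi_0,\pi^+)=\int_0^1 f(x)^2\diff x+\int_0^1 g(y)^2\diff y-1$. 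The right-hand side no longer references $\pi_0$: every member of $\Omega_{f,g}$ yields the same value. Taking in particular $\pi_0=\pi^+$ and using the symmetry of $\overline{L}$ recorded after Equation~\ref{eq:likelihood} gives $\overline{L}(\pi^+,\pi^+)=\int_0^1 f(x)^2\diff x+\int_0^1 g(y)^2\diff y-1=\overline{L}(\pi^+,\pi_0)$, which combined with the first claim is exactly the asserted identity.

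As a consistency check I would verify directly that $\overline{L}(\pi^+,\pi^+)=\int_S(f(x)+g(y)-1)^2\diff x\diff y$ expands, using $\int_0^1 f=\int_0^1 g=1$, to $\int_0^1 f(x)^2\diff x+\int_0^1 g(y)^2\diff y-1$; this confirms that the minimum value of the first claim coincides with the margin-only value of the second.

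There is no serious analytic obstacle here, as the two supporting results carry the real weight; the only points demanding care are bookkeeping ones. I would make sure hypothesis~\ref{cond:continuous} is maintained throughout so that $\pi^+$ is genuinely a density and $\Omega_{f,g}\neq\emptyset$, and I would be explicit that Equation~\ref{eq:likelihood+} requires nothing of $h$ beyond lying in $\Lambda$ with well-defined marginal densities, so that its application to every $\pi_0\in\Omega_{f,g}$ is legitimate. The conceptual crux, rather than a difficulty, is the observation that pairing against $\pi^+$ annihilates all dependence on the interior coupling structure of $\pi_0$ and retains only its margins — the continuous analogue of the discrete couple-matching property invoked just before the theorem.
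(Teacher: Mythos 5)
Your proof is correct and follows essentially the same route as the paper's: the first claim is reduced via Equation~\ref{eq:costL} to Problem~\ref{pb:continous} and settled by Proposition~\ref{prop:minL2}, and the second claim follows from the margin-only decomposition of Equation~\ref{eq:likelihood+} applied to any $\pi_0\in\Omega_{f,g}$, exactly as in the paper. Your added consistency check (expanding $\int_S(f(x)+g(y)-1)^2\diff x\diff y$ directly) and the explicit attention to hypothesis~\ref{cond:continuous} guaranteeing $\pi^+\in\Omega_{f,g}$ are minor refinements of, not departures from, the paper's argument.
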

\begin{proof}
~\\
The first part amounts to noticing, using Equation~\ref{eq:costL}, that it corresponds to the cost of Problem~\ref{pb:continous} which is precisely minimized among $\Omega_{f,g}$ by $\pi^+$.
For the second part, we rewrite equation~\ref{eq:likelihood+}:
$$\overline{L}(\pi^+,\pi_0) = \overline{L}(f,f)+\overline{L}(g,g)-1$$ 
so that we have:
$$\overline{L}(\pi^+,\pi_0) = \overline{L}(\pi^+,\pi^+)$$
\end{proof}

\subsection{Average likelihood as a continuous couple matching notion~\label{ssec:matchings}}
In any application, $\pi$ is not given and one must compute $\overline{L}$ using a set of $n$ realizations of $W$. We describe hereafter a method to approach $\overline{L}$; in addition we show $\overline{L}$ corresponds to the probability of couple matchings in the discrete case. 

Suppose $s_n=W_1,\ldots,W_n$ are $n$ i.i.d. points drawn under a law $P_\pi\in\Lambda$. We quote $P_n$ the empirical measure given by:
\begin{equation}
\label{eq:empiric_f}
P_n(s) = \frac{1}{n}\sum_{i=1}^n \delta_{W_i}
\end{equation}

A natural question would be, under a hypothesis of fixed margins densities $(f,g)$ and leveraging only on $P_n$, can we decide whether the underlying density $\pi$ equals $\pi^+=C^+(f,g)$ ? In the discrete case we could typically estimate the probability of a couple matching under $P_n$ and check it is minimal:
\begin{eqnarray*}
\mathbb{P}_{C^{\times}(P_n,P_n)}(W_1' = W_2') &=& \sum_{i=1}^n\frac{1}{n}P_n(W_i)\\
&=& \mathbb{E}_{W'\sim P_n}\left(P_n(W')\right)\\
&=& \overline{L}(P_n,P_n)
\end{eqnarray*}
This last equality precisely shows that the average likelihood corresponds to the discrete notion of couple matchings. Yet, in the continuous case (since $P_\pi$ has a density: $\pi$), two $W_i$ are never equal leading to a null probability for any $\pi$ whatever the correlation between margins is. This property prevents us from checking that $\overline{L}(P_n,P_n)$ is minimal since it is always null. 

We leverage on $\overline{L}(\pi^+,\pi^+) = \overline{L}(\pi^+,\pi_0)$ whatever $\pi_0\in\Omega_{f,g}$ to rely on $\overline{L}(P_n,\pi_0)$ for a fixed $\pi_0$ as a way to estimate continuous couple matchings. Later on, we will choose $\pi_0 = \pi^{\times} = C^{\times}(f,g)$.

Eventually, provided $P_n\rightarrow P_{\pi^+}$ in a certain sense, we expect $\overline{L}(P_n,\pi_0)$ to converge to $\overline{L}(\pi^+,\pi_0)$ that is to say to $\overline{L}(\pi^+,\pi^+)$ which is minimal among $\Omega_{f,g}$; section~\ref{ssec:test} formalizes the approach notably by introducing a topology on $\Lambda$.

\subsection{Statistical test based on average likelihood \label{ssec:test}}

\subsubsection{Definition of $t_n$}
Since $\pi^{\times}$ and $\pi^+$ are close in a certain sense (property~\ref{prop:maximalL1Diff}) we expect $\overline{L}(\pi^{\times},\pi^+) = l_0$ and $\overline{L}(\pi^{\times},\pi^{\times}) = l_1$ to be close. Then, using property~\ref{prop:minL2}, since $\pi^{+}$ and $\pi^{\times}$ have the same margins we know we always have $l_0\le l_1$. Eventually, provided $f$ and $g$ are not uniform,
\begin{equation}
\eta = l_1-l_0>0
\end{equation}

We shall take advantage on that difference to build up a statistical test of \textit{indeterminacy}. First, we define:
\begin{equation}
\Omega^{\times} = \left\{P_{\pi}\in\Omega_{f,g}~/~ \overline{L}(\pi,\pi^{\times}) \geq l_1\right\}
\end{equation}
and now we build up two hypotheses: $H_0$ the hypothesis $\pi=\pi^+$ while the opposite $H_1$ corresponds to $\pi\in\Omega^{\times}$. 

\begin{rem}
~\\
Using the $\rho-$topology we shall introduce shortly, we show in Proposition~\ref{prop:separable} that both hypotheses correspond to non void and separable subsets of $\Lambda$.
\end{rem}

Given $n$ observations $s_n=W_1,\ldots,W_n$, we quote $s$ the whole sequence and we define a statistical test $t_n$ depending only on the first $n$ coordinates of $s$ to distinguish between both hypotheses:

\begin{equation}
t_n(s) = \overline{L}(P_n,\pi^{\times}) = \int_S{\pi^{\times}\diff P_n(s)};
\end{equation}

where $P_n$ is defined as in Equation~\ref{eq:empiric_f}. Observing $s$, $H_0$ will be rejected if $t_n(s)$ is greater than a specific value. The just-introduced test is motivated by the combination of subsection~\ref{ssec:matchings} which allows us to interpret it as an estimation of continuous couple matchings and of Theorem~\ref{th:MLE+} which enables us (under $H_0$) to estimate $\overline{L}$ using any $\pi_0$, in particular using $\pi^{\times}$. We will study $t_n$ under the two eligible hypotheses to calibrate the threshold. Specifically, we follow the track of the second section of~\cite{groeneboom1977bahadur} dedicated to Bahadur's slope. It requires the introduction of some notations that we report hereafter.

\subsubsection{Usual notations and basic lemmas}
As common in the literature and notably in the two articles (\cite{csiszar1984sanov} and \cite{groeneboom1979large}) cited beforehand, we apply the $\tau$-topology on $\Lambda$ which is defined by the basic neighborhoods:
\begin{equation}\label{tau_neigh}
U(P,\mathcal{P},\epsilon) = \left\{Q/~\forall i,~|P(B_i)-Q(B_i)| < \epsilon\right\}
\end{equation}
where $P\in\lambda$, $\epsilon>0$ and $\mathcal{P}$ ranges over all $\mathcal{B}$-measurable partitions $\mathcal{P} = (B_1,\ldots,B_k)$ of $S$. Consequently, for any set $\Omega \subset \Lambda$, we will quote $\Omega^{\mathrm{o}}$ and $\overline{\Omega}$ the interior and the closure of $\Omega$ in the sense of the $\tau$-topology. 

A sequence of probability measures $(Q_n)_{n\in\mathbb{N}}$ converges to $Q$ for this topology if and only if $\lim_{n\rightarrow\infty} \int_{\mathbb{R}}f \diff Q_n \rightarrow \int_{\mathbb{R}}f \diff Q$ for each $\mathcal{B}$-measurable and bounded function $f:S\rightarrow\mathbb{R}$.

We also introduce the usual Kullback-Leibler divergence quoted $D_{KL}$ whose properties are gathered in \cite{van2014renyi} and which is defined by:
\begin{eqnarray}
D_{KL}(P|Q) &=&  \int_S \log\left(\frac{\diff P}{\diff Q}\right)\diff P ~~\textnormal{if }Q<<P \label{eq:KL}\\
&=& \infty ~~ \textnormal{otherwise}\nonumber
\end{eqnarray}
together with the usual conventions $\log 0 = -\infty$, $0\cdot(\pm\infty)=0$ and $\log(a/0) = \infty, ~\forall a>0$.

An usual result is the lower semi-continuous property of $D_{KL}$:
\begin{lemma}[Divergence lower semi-continuous]\label{lem:continKL}
~\\
For any $P\in\Lambda$, the function:
$$Q\mapsto D_{KL}(Q,P)$$ 
is $\tau$-lower semi-continuous.
\end{lemma}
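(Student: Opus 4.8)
The plan is to express $D_{KL}(\cdot,P)$ as a supremum of $\tau$-continuous functionals, each indexed by a finite measurable partition, and then to invoke the elementary fact that a pointwise supremum of lower semi-continuous functions is itself lower semi-continuous. This route is natural here because the $\tau$-topology is defined, in~\eqref{tau_neigh}, precisely through such partitions.

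First I would recall the partition (Gelfand--Yaglom--Perez) representation of relative entropy, whose properties are gathered in~\cite{van2014renyi}: for any $Q,P\in\Lambda$,
\begin{equation*}
D_{KL}(Q,P) = \sup_{\mathcal{P}} \sum_{i=1}^{k} Q(B_i)\log\frac{Q(B_i)}{P(B_i)},
\end{equation*}
the supremum ranging over all finite $\mathcal{B}$-measurable partitions $\mathcal{P}=(B_1,\ldots,B_k)$ of $S$, with the same conventions $0\log 0 = 0$ and $a\log(a/0)=+\infty$ ($a>0$) as in~\eqref{eq:KL}. This identity is exactly the bridge between the divergence and the $\tau$-topology.

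Next I would fix a single partition $\mathcal{P}=(B_1,\ldots,B_k)$ and examine the functional $\Phi_{\mathcal{P}}(Q)=\sum_{i=1}^{k} Q(B_i)\log\big(Q(B_i)/P(B_i)\big)$. By the very definition of the $\tau$-topology, each evaluation $Q\mapsto Q(B_i)$ is $\tau$-continuous. Composing with the scalar map $t\mapsto t\log(t/c)$, which is continuous on $[0,1]$ when $c=P(B_i)>0$ and lower semi-continuous (valued in $[0,+\infty]$) when $c=0$, I obtain that each term, hence the finite sum $\Phi_{\mathcal{P}}$, is $\tau$-lower semi-continuous; here one uses that every term is bounded below by $-1/e$, so no $+\infty-\infty$ ambiguity arises in the sum. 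Finally, since $D_{KL}(\cdot,P)=\sup_{\mathcal{P}}\Phi_{\mathcal{P}}$ is a supremum of $\tau$-lower semi-continuous maps, it is $\tau$-lower semi-continuous, which is the claim.

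The main obstacle is the justification of the partition representation itself, together with the careful handling of the degenerate terms where $P(B_i)=0$; I would dispatch the latter through the stated conventions, noting that such a term merely forces $\Phi_{\mathcal{P}}(Q)=+\infty$ as soon as $Q(B_i)>0$, which is fully consistent with lower semi-continuity. Everything else is routine once the variational identity is in hand.
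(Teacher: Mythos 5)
Your proof is correct, and it is essentially the same argument as the one the paper relies on: the paper simply cites Lemma~2.2 of \cite{groeneboom1979large}, whose proof is exactly this route — the Gelfand--Yaglom--Perez partition representation of $D_{KL}$ combined with the $\tau$-continuity of $Q\mapsto Q(B_i)$ and the fact that a supremum of lower semi-continuous functionals is lower semi-continuous. The only difference is that you spell out the degenerate cases ($P(B_i)=0$, the $-1/e$ lower bound) which the citation leaves implicit.
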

\begin{proof}
~\\
It corresponds to Lemma~2.2 of \cite{groeneboom1979large}.
\end{proof}

Furthermore, for any subset $\Omega\subset\Lambda$ and any probability law $P\in\Lambda$ we extend the definition of the divergence:
\begin{equation}\label{eq:divergenceSpace}
D_{KL}(\Omega|P) = \inf_{Q\in\Omega} D_{KL}(Q|P)
\end{equation}
with the convention that it equals $\infty$ if $\Omega$ is empty.

Eventually, we introduce a second topology for $\Lambda$, the $\rho$-topology which is induced by the supremum metric $d$ defined on the ordered set $S=[0,1]\times[0,1]$:
\begin{equation}
d(P,Q) = \sup_{x\in S} \left|P(0,x) - Q(0,x)\right|.
\end{equation}

Enabled with those notations we list a bench of lemmas extracted from \cite{groeneboom1979large} or \cite{csiszar1984sanov}. The first lemma links the two previously defined topologies and enables us to interpret the open space of one in the second. 
\begin{lemma}[$\tau$ is finer than $\rho$]\label{lem:finer}
~\\
The $\tau$-topology is finer than the $\rho$-topology.
\end{lemma}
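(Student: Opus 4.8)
The plan is to unwind the definition of $\rho$ and show directly that every basic $\rho$-neighborhood of a measure contains a basic $\tau$-neighborhood; since $\rho$ is metrizable, this is equivalent to the identity map $(\Lambda,\tau)\to(\Lambda,\rho)$ being continuous, hence to $\tau$ refining $\rho$. Concretely, the metric $d$ is the two-dimensional Kolmogorov--Smirnov distance between cumulative distribution functions: writing $F_P(x)=P([0,x_1]\times[0,x_2])$ for $x=(x_1,x_2)$, we have $d(P,Q)=\sup_{x\in S}|F_P(x)-F_Q(x)|$. Thus, fixing $P\in\Lambda$ and $\epsilon>0$, it suffices to produce a finite $\mathcal{B}$-measurable partition $\mathcal{P}$ and a $\delta>0$ so that $U(P,\mathcal{P},\delta)\subseteq\{Q:d(P,Q)<\epsilon\}$.

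First I would set up a rectangular grid. Choose points $0=t_0<t_1<\dots<t_m=1$ in each coordinate and let $\mathcal{P}$ be the partition of $S=[0,1]^2$ into cells $R_{ij}=(t_{i-1},t_i]\times(t_{j-1},t_j]$ (with the bottom and left edges absorbed into the first cells). For any $x=(x_1,x_2)$ lying in the cell indexed by $(i,j)$, monotonicity of the two-dimensional CDF gives the sandwich $F_Q(t_{i-1},t_{j-1})\le F_Q(x)\le F_Q(t_i,t_j)$, and the same bracketing for $P$. Subtracting the two sandwiches bounds $|F_Q(x)-F_P(x)|$ by the grid-point discrepancy $\max_{k,l}|F_Q(t_k,t_l)-F_P(t_k,t_l)|$ plus the $P$-oscillation across a single cell, namely $F_P(t_i,t_j)-F_P(t_{i-1},t_{j-1})$.

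I would then control the two error terms separately. The $P$-oscillation is the $P$-mass of the L-shaped region $[0,t_i]\times[0,t_j]\setminus[0,t_{i-1}]\times[0,t_{j-1}]$, which is contained in one horizontal and one vertical slab, hence at most the sum of the $P$-masses of $(t_{i-1},t_i]\times[0,1]$ and $[0,1]\times(t_{j-1},t_j]$. Choosing the grid points as approximate quantiles of the two marginals of $P$ forces every slab to have $P$-mass below $\epsilon/4$, so this oscillation stays below $\epsilon/2$ uniformly over cells. For the grid-point term, each value $F_P(t_k,t_l)$ is a finite sum of the cell masses $P(R_{ij})$ of $\mathcal{P}$; if $Q\in U(P,\mathcal{P},\delta)$ then every cell mass is matched within $\delta$, so the discrepancy at a grid point is at most $m^2\delta$. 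Taking $\delta<\epsilon/(2m^2)$ pushes that term below $\epsilon/2$ as well. Combining the two estimates yields $d(P,Q)<\epsilon$ for all $Q\in U(P,\mathcal{P},\delta)$, which is exactly the inclusion sought.

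The main obstacle is the quantile construction when $P$ carries atoms: a marginal atom of mass exceeding $\epsilon/4$ cannot be split, so such points must be isolated as grid nodes, and the half-open convention on the cells must be chosen so that each large atom sits alone in its slab while the remaining diffuse mass is cut finely enough. Since there are only finitely many atoms of mass above $\epsilon/4$, this is always achievable; only the bookkeeping of the interval endpoints demands care. Everything else reduces to the routine monotonicity estimate described above.
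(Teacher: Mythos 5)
The paper offers no argument of its own here (it simply cites Lemma~2.1 of \cite{groeneboom1979large}), so a self-contained proof is a genuinely different route, and the core of yours is the standard, correct one. Your reduction of ``$\tau$ finer than $\rho$'' to producing, for each $P$ and $\epsilon>0$, a basic neighborhood $U(P,\mathcal{P},\delta)$ inside the $d$-ball of radius $\epsilon$, the sandwich of $F_Q(x)$ and $F_P(x)$ between grid-corner values, the split into a grid-point discrepancy (at most $m^2\delta$, since each $F_P(t_k,t_l)$ is a finite sum of cell masses) and a $P$-oscillation term (at most the sum of two slab masses), and the quantile choice of the grid are all sound. Whenever both marginals of $P$ are atomless this is a complete proof; and since the paper invokes the lemma around limit laws that possess densities (such as $P_{\pi^+}$), that case arguably covers what the paper actually uses.

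The gap is in your final paragraph, and it is not mere bookkeeping: no choice of ``half-open convention'' can rescue the sandwich at a large atom. Take $P=\delta_{(t^*,1/2)}$ and cells $(t_{i-1},t_i]\times(t_{j-1},t_j]$ with $t^*$ among the grid nodes, as you propose. Then $Q_\theta=\delta_{(t^*-\theta,1/2)}$ has exactly the same cell masses as $P$ for all small $\theta>0$, hence lies in $U(P,\mathcal{P},\delta)$ for \emph{every} $\delta>0$, yet $d(P,Q_\theta)=1$ (evaluate the two distribution functions at $(t^*-\theta,1/2)$); with the left-closed convention the same failure occurs with a rightward shift. The structural reason is that points just to the left of the atom must be bracketed using the left limit $F_P(t^{*-},\cdot)$, i.e.\ using open rectangles $[0,t^*)\times[0,t_j]$, and such sets are never unions of half-open cells, so your $\tau$-neighborhood carries no information about their $Q$-measure. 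The repair is to make each atom line a cell of its own: in each coordinate take the open intervals $(t_{i-1},t_i)$ \emph{and} the singletons $\{t_i\}$, and let $\mathcal{P}$ consist of all products of these. Then both $[0,t_i]\times[0,t_j]$ and $[0,t_i)\times[0,t_j)$ are finite unions of cells, so grid values and their one-sided limits are controlled within $N\delta$ ($N$ the number of cells), while the oscillation of $F_P$ across an open slab is small by construction of the grid; with that modification your sandwich closes exactly as in the atomless case.
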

\begin{proof}
~\\
It corresponds to Lemma~2.1 of \cite{groeneboom1979large}.
\end{proof}

The $\rho$-topology is handier since it is associated to a distance $d$ and notably to define the subset neighborhood of a probability law $P$ as exposed below:
\begin{equation}
V_{\epsilon}(P) = \{Q/ d(P,Q)<\epsilon\}
\end{equation}

A first result regarding $t_n$ is related to the separability of $\{P_{\pi^+}\}$ and $\Omega^{\times}$. The following proposition shows the two subsets of $\Lambda$ are separable using the $\rho-$topology.

\begin{prop}\label{prop:separable}[Separable hypotheses]
~\\
Supposing $l_1>l_0$, it exists an $\epsilon >0$ such that:
\begin{equation}
\Omega^{\times}\cap V_{\epsilon}(P_{\pi^+}) = \emptyset
\end{equation}

\end{prop}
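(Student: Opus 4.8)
The plan is to read $\Omega^{\times}$ as a superlevel set of the \emph{linear} functional $\Phi(P) := \int_S \pi^{\times}\diff P = \overline{L}(\pi,\pi^{\times})$ (for $P = P_{\pi}$) and to exploit that $P_{\pi^+}$ sits strictly below the threshold $l_1$. Indeed, applying Theorem~\ref{th:MLE+} with $\pi_0 = \pi^{\times}$ and using the symmetry of $\overline{L}$ gives $\Phi(P_{\pi^+}) = \overline{L}(\pi^+,\pi^{\times}) = \overline{L}(\pi^+,\pi^+) = l_0$, which is $< l_1$ by hypothesis. Hence $P_{\pi^+} \notin \Omega^{\times}$, and it suffices to show that $\Phi$ is $\rho$-continuous (upper semicontinuity at $P_{\pi^+}$ already suffices): the sublevel set $\{\Phi < l_1\}$ is then $\rho$-open, contains $P_{\pi^+}$, and therefore contains some ball $V_{\epsilon}(P_{\pi^+})$ disjoint from $\Omega^{\times}$.

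The heart of the argument is to control $\Phi(P) - \Phi(P_{\pi^+}) = \int_S \pi^{\times}\diff(P - P_{\pi^+})$ by the $\rho$-distance $d(P,P_{\pi^+}) = \sup_{x\in S}|P(0,x) - P_{\pi^+}(0,x)|$, which measures exactly the uniform gap between the two joint cumulative distribution functions. Since the integrand $\pi^{\times}(x,y) = f(x)g(y)$ is a fixed product of the margin densities, I would use a two-dimensional Riemann--Stieltjes integration by parts to move the differentials off $P$ and onto $\pi^{\times}$, rewriting the integral against $F_P - F_{P_{\pi^+}}$ on the square together with one-dimensional boundary terms along $\partial S$. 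This produces a bound of the form
\begin{equation*}
\left|\Phi(P) - \Phi(P_{\pi^+})\right| \le K(f,g)\,\sup_{x\in S}\left|F_P(x) - F_{P_{\pi^+}}(x)\right| = K(f,g)\,d(P,P_{\pi^+}),
\end{equation*}
where $K(f,g)$ collects the (Hardy--Krause) total-variation quantities of $f$ and $g$, hence is finite whenever $f$ and $g$ have finite variation. In other words $\Phi$ is $K$-Lipschitz for $d$.

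With Lipschitz continuity in hand the conclusion is immediate: writing $\eta = l_1 - l_0 > 0$ and choosing $\epsilon = \eta/(2K(f,g))$, any $P\in V_{\epsilon}(P_{\pi^+})$ satisfies $\Phi(P) \le l_0 + K(f,g)\,\epsilon = l_0 + \eta/2 < l_1$, so $P \notin \Omega^{\times}$; this is precisely $\Omega^{\times}\cap V_{\epsilon}(P_{\pi^+}) = \emptyset$. Note the bound holds for every $P\in\Lambda$, so there is no need to restrict \emph{a priori} to $\Omega_{f,g}$.

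I expect the main obstacle to be making the integration-by-parts estimate rigorous for general margins: one must assume (or verify from hypothesis~\ref{cond:continuous} together with the constructive form of Proposition~\ref{prop:constructiveAlpha}) that $f$ and $g$ have finite total variation, and then handle the boundary contributions on $\partial S$ carefully so that the constant $K(f,g)$ is genuinely finite. Should one prefer to avoid variation assumptions, an alternative is to observe that $\rho$-convergence forces uniform convergence of the joint cumulative distribution functions, hence weak convergence of the measures, and then to invoke the portmanteau theorem for the bounded continuous integrand $\pi^{\times}$ to obtain $\rho$-continuity of $\Phi$ directly, at the cost of losing the explicit Lipschitz constant.
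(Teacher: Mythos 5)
Your proposal is correct and follows the same high-level route as the paper: both evaluate the linear functional $\Phi(P)=\int_S \pi^{\times}\diff P$ at $P_{\pi^+}$ (obtaining $l_0$ via Theorem~\ref{th:MLE+}), and both show $\Phi$ moves by at most a controlled amount on a small $\rho$-ball, so that $\Phi<l_1$ there. The difference lies in the key estimate. The paper bounds $\left|\int_S \pi^{\times}\diff(P-P_{\pi^+})\right|$ directly by $\epsilon\,\lVert\pi^{\times}\rVert_{\infty}$, which would be valid if $\epsilon$ controlled the total-variation distance, but $d$ is a Kolmogorov-type sup-distance between joint cumulative distribution functions, and closeness of CDFs does not by itself dominate integrals of bounded measurable functions; as written, that step is loose. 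You instead move the differentials onto $\pi^{\times}$ by two-dimensional integration by parts (a Koksma--Hlawka-type inequality), obtaining $\left|\Phi(P)-\Phi(P_{\pi^+})\right|\le K(f,g)\,d(P,P_{\pi^+})$, which is exactly the kind of bound the $\rho$-metric supports — so your care pinpoints and repairs the genuine gap in the paper's own argument. The price is an extra hypothesis: $K(f,g)$ is finite only if $f$ and $g$ have bounded (Hardy--Krause) variation, which does not follow from the paper's standing assumption that the densities are merely bounded; likewise your portmanteau alternative needs $\pi^{\times}$ continuous, again not assumed in the paper. So either route requires a mild regularity hypothesis beyond the paper's, and it would be worth stating it explicitly; with that caveat, your argument is complete and, in fact, more rigorous than the published one.
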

\begin{proof}
~\\
Let us suppose for a given $\epsilon>0$, $P\in V_{\epsilon}(P_{\pi^+})$ then 
\begin{eqnarray*}
\int_S \pi^{\times}\diff P &=& \int_S \pi^{\times}(\diff P -\pi^+ + \pi^+)\\
&=& l_0 + \int_S \pi^{\times}(\diff P -\pi^+)\\
& \le & l_0 + \epsilon ||\pi^{\times}||_{\infty}\\
\end{eqnarray*}
For $\epsilon$ small enough this last quantity is strictly less than $l_1$ which concludes the proof: 
$$P \in V_{\epsilon}(p_{\pi^+}) \implies P\not\in\Omega^{\times}$$
\end{proof}

With the $\rho-$topology on $\Lambda$ we can formally define the hypotheses $P_n\rightarrow P_{\pi}$ as quoted below:
\begin{equation}
P_n\xrightarrow[n\rightarrow\infty]{\rho} P_\pi.
\end{equation}
We suppose this convergence granted in the rest of the paper. Additionally, we suppose the two densities $f$ and $g$ are bounded (it implies $\pi^{\times}$ and $\pi^+$ also are).

\subsubsection{Analysis of $t_n$}
To estimate the efficiency of a test, the authors of \cite{groeneboom1977bahadur} define $G_n(t) = \mathbb{P}_0(t_n<t)$ together with its opposite $L_n(t) = 1-G_n(t)$ called the tail probability of the test based on $t_n$. Then, they apply it for $t=t(s)$ which corresponds to the current empiric observations and define the random variable:
\begin{mydef}[Tail probability of a test]
\begin{equation}
L_n(s) = L_n(t_n(s)) = 1-G_n(t_n(s)) = \mathbb{P}_0(t_n\geq t_n(s))
\end{equation}
\end{mydef}

As quoted, in their paper, "the smaller the value of $L_n(s)$ the more untenable is the hypothesis $H_0$ in the
light of the observations". Having said that they study the behavior of $L_n$ under each hypothesis ; we report the corresponding computations for our test $t_n$ hereafter. 

\paragraph{Convergence under $H_1$} Let us suppose $H_1:\pi\in\Omega^{\times}$ then we have almost surely the convergence of our statistical test. 
\begin{prop}[Convergence under $H_1$]
\label{prop:l1}
~\\
Under $H_1:\pi\in\Omega^{\times}$,
$$t_n \xrightarrow[n\rightarrow\infty]{a.s} \overline{L}(\pi^{\times},\pi) \geq l_1$$
\end{prop}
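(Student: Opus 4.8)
The plan is to recognize the test statistic $t_n$ as an empirical average of independent, identically distributed bounded random variables and then to invoke the strong law of large numbers; the announced limit and the inequality will both fall out of the definitions. First I would unfold $t_n$ against the empirical measure of Equation~\ref{eq:empiric_f}. Since $P_n=\frac1n\sum_{i=1}^n\delta_{W_i}$, integrating the fixed bounded function $\pi^{\times}$ against $P_n$ collapses to a sample mean,
$$t_n=\int_S\pi^{\times}\diff P_n=\frac1n\sum_{i=1}^n\pi^{\times}(W_i),$$
and, because the $W_i$ are i.i.d.\ with law $P_\pi$, the real variables $\pi^{\times}(W_i)$ are themselves i.i.d.\ with common expectation $\mathbb{E}_{W\sim P_\pi}[\pi^{\times}(W)]=\int_S\pi^{\times}(w)\pi(w)\diff w=\overline{L}(\pi,\pi^{\times})$, which is exactly the claimed limit.

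Next I would discharge the single integrability hypothesis that the law of large numbers requires. The standing assumption that $f$ and $g$ are bounded forces $\pi^{\times}=fg$ to be bounded on $S$, so every $\pi^{\times}(W_i)$ is bounded and hence integrable; the strong law of large numbers then gives $t_n\xrightarrow[n\to\infty]{a.s.}\overline{L}(\pi,\pi^{\times})$. To close, I would read off the inequality from the hypothesis: by the symmetry of $\overline{L}$ and the very definition of $\Omega^{\times}$, the assumption $\pi\in\Omega^{\times}$ states that $\overline{L}(\pi,\pi^{\times})\ge l_1$, which is the asserted bound $\overline{L}(\pi^{\times},\pi)\ge l_1$ on the limit.

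I do not expect a genuine obstacle: the entire content is the identification of $t_n$ with an i.i.d.\ average, after which boundedness makes the law of large numbers immediate. The only point I would be careful about is \emph{not} to route the argument through the assumed convergence $P_n\xrightarrow{\rho}P_\pi$. By Lemma~\ref{lem:finer} the $\rho$-topology is coarser than the $\tau$-topology, and it is only the latter whose convergence is characterized by convergence of integrals against arbitrary bounded measurable functions; since $\pi^{\times}$ need not be continuous, $\rho$-convergence alone would not license passing $\pi^{\times}$ through the limit. The direct law-of-large-numbers route sidesteps this entirely and yields the almost sure statement as stated.
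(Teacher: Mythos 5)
Your proof is correct, but it takes a genuinely different route from the paper's, and arguably a sounder one. The paper's own proof starts from the standing assumption $P_n\xrightarrow[n\rightarrow\infty]{\rho}P_{\pi}$ and then asserts that, $\pi^{\times}$ being bounded and $\mathcal{B}$-measurable, $\int_S\pi^{\times}\diff P_n\rightarrow\int_S\pi^{\times}\pi$. That implication is exactly the characterization of convergence in the $\tau$-topology, not the $\rho$-topology; since Lemma~\ref{lem:finer} only says $\tau$ is finer than $\rho$, $\rho$-convergence alone does not license passing the (possibly discontinuous) function $\pi^{\times}$ through the limit --- precisely the point you flagged and chose to avoid. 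Your alternative --- writing $t_n=\frac{1}{n}\sum_{i=1}^n\pi^{\times}(W_i)$ and applying the strong law of large numbers to the i.i.d.\ bounded variables $\pi^{\times}(W_i)$ --- bypasses the topological machinery entirely: the exceptional null set then depends only on the single fixed function $\pi^{\times}$, which is all the proposition needs, whereas almost-sure $\tau$-convergence of empirical measures of an atomless law actually fails (consider the partition separating the countable random set of all sample points from its complement), so the topological route could not be repaired by simply upgrading the standing assumption. Your identification of the limit as $\overline{L}(\pi,\pi^{\times})=\overline{L}(\pi^{\times},\pi)$ by symmetry, and the reading of $\pi\in\Omega^{\times}$ as $\overline{L}(\pi,\pi^{\times})\geq l_1$, coincide with the paper's closing step. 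In short, your argument is valid and in addition quietly repairs a conflation of the two topologies present in the paper's own proof.
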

\begin{proof}
~\\
Almost surely, $P_n\xrightarrow[n\rightarrow\infty]{\rho} P_\pi \in \Omega^{\times}$.

Since $\pi^{\times}:S\rightarrow\mathbb{R}$ is a $\mathcal{B}$-measurable and bounded function it implies in particular 
$$\int_{S}\pi^{\times}\diff P_n \xrightarrow[n\rightarrow\infty]{a.s.} \int_{S}\pi^{\times}\pi = \overline{L}(\pi^{\times},\pi).$$

Eventually, given that $\pi\in\Omega^{\times}$, $\overline{L}(\pi^{\times},\pi) \geq l_1$.
\end{proof}

\begin{rem}\label{rem:sim_pi2} [Realizations under $\pi^2$]
~\\
$\pi$ being bounded on $S = [0,1]^2$, we want to simulate under $P_{\pi^2}$ whose density is $\frac{\pi^2}{\int_{S}\pi^2}$. We notice that we always have $\pi^2\le||\pi||_{\infty}\pi$. Then, applying an usual reject method we simulate a couple 
$$M=\left(\frac{\pi^2(X)}{\int_{S}\pi^2},\mathbb{U}\frac{||\pi||_{\infty}}{\int_S{\pi^2}}\pi(X)\right);$$
where $X\sim P_{\pi}$ and $\mathbb{U}$ is the uniform law on $S$.

The reject method stands that keeping $M$ only if $M_1\geq M_2$ conveys a $M_1$ under $P_{\pi^2}$: it gives a method to draw under $P_{\pi^2}$ by leveraging on a method to draw under $P_{\pi}$. 

Furthermore, in our particular application of the reject method, $M_1\geq M_2$ can be simplified:
\begin{eqnarray*}
\frac{\pi^2(X)}{\int_{S}\pi^2} &\geq & \mathbb{U}\frac{||\pi||_{\infty}}{\int_S \pi^2} \pi(X);\\
\pi(X) &\geq & \mathbb{U}||\pi||_{\infty}.
\end{eqnarray*}
Eventually it amounts to simulate independently $M=(X,U)\sim P_{\pi}\otimes\mathbb{U}$ and keep $X$ if and only if $\pi(X) \geq U ||\pi||_{\infty}$.
\end{rem}

\begin{rem}
~\\
According to the previous remark, a law under $P_{\pi^2}$ will concentrate its values around the mode of $\pi$. Indeed, the higher $\pi(X)$ is the most chance we have to keep $X$ in the reject test. 
\end{rem}
\paragraph{Divergence under $H_0$}
Let us now suppose $H_0:\pi=\pi^+$ then we similarly show:
\begin{prop}[Convergence under $H_0$]
~\\
Under $H_0:\pi=\pi^+$,
$$t_n\xrightarrow[n\rightarrow\infty]{a.s} \overline{L}(\pi^{\times},\pi^+) = l_0$$
\end{prop}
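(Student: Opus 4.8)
The plan is to replicate the argument of Proposition~\ref{prop:l1} almost verbatim, now with the underlying law specialised to $\pi=\pi^+$; the $H_0$ case is in fact marginally simpler, since the last step is an equality rather than an inequality. First I would invoke the standing hypothesis granted just before the analysis of $t_n$, namely $P_n\xrightarrow[n\rightarrow\infty]{\rho}P_\pi$ almost surely, which under $H_0$ reads $P_n\xrightarrow[n\rightarrow\infty]{\rho}P_{\pi^+}$ almost surely.

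Next, since $f$ and $g$ were assumed bounded, the product density $\pi^{\times}=C^{\times}(f,g)$ is bounded and $\mathcal{B}$-measurable on $S$. This is exactly the regularity exploited under $H_1$, so the same passage to the limit gives
$$t_n=\int_S \pi^{\times}\diff P_n \xrightarrow[n\rightarrow\infty]{a.s.} \int_S \pi^{\times}\pi^+ = \overline{L}(\pi^{\times},\pi^+).$$
It then only remains to identify the limit with $l_0$, which holds by the very definition $l_0=\overline{L}(\pi^{\times},\pi^+)$ set at the start of subsection~\ref{ssec:test}, and the claim follows.

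The single point deserving care, inherited from Proposition~\ref{prop:l1} rather than new here, is the justification that one may pass to the limit inside $\int_S\pi^{\times}\diff P_n$ for a function that is merely bounded measurable. The cleanest underpinning is to read $\int_S\pi^{\times}\diff P_n=\frac1n\sum_{i=1}^n\pi^{\times}(W_i)$ as an empirical average and invoke the strong law of large numbers, which applies since $\pi^{\times}$ is bounded and hence $P_{\pi^+}$-integrable; alternatively one appeals to Lemma~\ref{lem:finer} together with the $\tau$-characterisation of convergence through integrals of bounded measurable functions. Because the identical step was already accepted under $H_1$, no further work is required and the proof closes at once.
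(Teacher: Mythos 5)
Your proposal is correct and matches the paper exactly: the paper gives no separate proof for the $H_0$ case, stating only that it follows ``similarly'' to Proposition~\ref{prop:l1}, and your argument is precisely that transposition with the limit identified as $l_0$ by definition. Your strong-law-of-large-numbers reading of $t_n=\frac{1}{n}\sum_{i=1}^n\pi^{\times}(W_i)$ is in fact the cleanest justification of the limit passage (note the alternative via Lemma~\ref{lem:finer} alone would not suffice, since $\tau$ being finer than $\rho$ means $\rho$-convergence does not imply the $\tau$-characterisation by integrals), so no gap remains.
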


We want to estimate the asymptotic probability under $H_0$ to be close to $l_1$. A deviation proposition is given below to validate it is exponentially rare in a certain sense. 

A first result is the estimation of the probability that $t_n(s)$ with $s$ drawn under $H_0:\pi=\pi^+$ is larger than a fixed value $t$. To do so we introduce the random variables $Y = \pi^{\times}(X)$ where $X$ is under the law $\pi^+$ and its logarithmic moment generating function defined by:
\begin{equation}
\phi_0(t) = \log \mathbb{E}_0(\exp (t Y)) = \log\left(\int_S \exp(t\pi^{\times}(s)) \pi^+(s)\right)
\end{equation}
that we suppose is finite for $|t|$ small enough. Out of $\phi$ is coined its Legendre transform defined by:
\begin{equation}
I(t) = \sup_{x\in\mathbb{R}}\left\{tx-\phi(t)\right\}
\end{equation}

Eventually, we introduce the subset of $\Lambda$ in which $P_n(s)$ must stay so that we have $t_n(s)\geq t$:
\begin{equation}
\Omega_t = \left\{P\in\Lambda~s.t.~L(P,\pi^{\times}) \geq t\right\}
\end{equation}
We can now write a deviation theorem for $t_n$ under $H_0$ as stated in Theorem~\ref{th:cramer}.
\begin{theorem}[Cramér]
\label{th:cramer}
~\\
For any $t\geq l_0$, we have
\begin{equation}
\lim_{n\rightarrow\infty} \frac{1}{n}\log\left(\mathbb{P}_0(t_n\geq t)\right) = -I(t) = - D_{KL}(\Omega_t|P_{\pi^+}).
\end{equation}
\end{theorem}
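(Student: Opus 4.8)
The plan is to split the statement into its two equalities and handle them separately. For the first, $\lim_{n\to\infty}\frac1n\log\mathbb{P}_0(t_n\ge t)=-I(t)$, I would observe that under $H_0$ the test statistic $t_n(s)=\int_S \pi^{\times}\diff P_n=\frac1n\sum_{i=1}^n \pi^{\times}(W_i)$ is \emph{exactly} the empirical mean of the i.i.d.\ real random variables $Y_i=\pi^{\times}(W_i)$, where the $W_i$ are drawn under $\pi^+$. Their common expectation is $\mathbb{E}_0(Y)=\overline{L}(\pi^+,\pi^{\times})=l_0$, and since $f,g$ (hence $\pi^{\times}$) are bounded, $\phi_0(\theta)=\log\mathbb{E}_0(e^{\theta Y})$ is finite for every $\theta\in\mathbb{R}$. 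This places us squarely in the hypotheses of the classical Cramér theorem for sums of bounded i.i.d.\ variables, which for any $t$ in the upper tail $t\ge l_0=\mathbb{E}_0(Y)$ gives $\lim_{n\to\infty}\frac1n\log\mathbb{P}_0(t_n\ge t)=-I(t)$, with $I$ the Legendre transform of $\phi_0$. I would invoke this theorem directly rather than reprove the matching sub- and super-exponential bounds.

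For the identification $I(t)=D_{KL}(\Omega_t|P_{\pi^+})$ I would use the Gibbs (Donsker--Varadhan) variational representation of the divergence, which yields, for every $Q\ll P_{\pi^+}$ and every $\theta$,
\[
D_{KL}(Q|P_{\pi^+})\ \ge\ \theta\,\overline{L}(Q,\pi^{\times})-\phi_0(\theta).
\]
Restricting to $Q\in\Omega_t$, so that $\overline{L}(Q,\pi^{\times})\ge t$, and to $\theta\ge 0$ (legitimate because for $t\ge l_0$ the concave map $\theta\mapsto \theta t-\phi_0(\theta)$ has nonnegative derivative $t-l_0$ at the origin, so its supremum is attained at some $\theta^{\ast}\ge 0$), I obtain $D_{KL}(Q|P_{\pi^+})\ge \theta t-\phi_0(\theta)$ for all such $\theta$. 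Taking the infimum over $Q\in\Omega_t$ and the supremum over $\theta\ge0$ produces the inequality $D_{KL}(\Omega_t|P_{\pi^+})\ge I(t)$.

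For the reverse inequality I would exhibit the exponentially tilted law $Q^{\ast}$ defined by $\frac{\diff Q^{\ast}}{\diff P_{\pi^+}}=e^{\theta^{\ast}\pi^{\times}}/\mathbb{E}_0(e^{\theta^{\ast}\pi^{\times}})$, choosing $\theta^{\ast}\ge 0$ so that $\phi_0'(\theta^{\ast})=\overline{L}(Q^{\ast},\pi^{\times})=t$. Such a $\theta^{\ast}$ exists because $\phi_0'$ is continuous, increasing, and sweeps $[\,l_0,\operatorname{ess\,sup}\pi^{\times})$ as $\theta$ runs over $[0,\infty)$. By construction $Q^{\ast}\in\Omega_t$, and the standard computation of the divergence of a tilted measure gives $D_{KL}(Q^{\ast}|P_{\pi^+})=\theta^{\ast}t-\phi_0(\theta^{\ast})=I(t)$, whence $D_{KL}(\Omega_t|P_{\pi^+})\le I(t)$; combined with the previous paragraph this yields the claimed identity.

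The routine part is the invocation of Cramér; the delicate point, and the main obstacle, is this identification with the divergence. The variational lower bound is clean, but the reverse direction requires the mean-matching tilt $\theta^{\ast}$ to exist and the infimum of $D_{KL}$ over $\Omega_t$ to be attained on the boundary $\{\overline{L}(Q,\pi^{\times})=t\}$. Care is needed in two boundary regimes: at $t=l_0$, where $\theta^{\ast}=0$, $Q^{\ast}=P_{\pi^+}$ and $I(t)=0$; and at the top of the range $t\uparrow\operatorname{ess\,sup}\pi^{\times}$, where no finite $\theta^{\ast}$ exists and the equality must be recovered by a limiting argument. If one prefers to read the result through the empirical-measure large deviations of~\cite{groeneboom1979large}, the same care reappears as the need to match the infima of $D_{KL}$ over the $\tau$-interior and $\tau$-closure of $\Omega_t$, which holds here because $\Omega_t$ is a half-space for the $\tau$-continuous linear functional $Q\mapsto\overline{L}(Q,\pi^{\times})$.
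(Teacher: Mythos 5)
Your proposal is correct, and its first half coincides with the paper's own proof: the paper likewise rewrites $\mathbb{P}_0(t_n\geq t)$ as $\mathbb{P}_0\bigl(\sum_{i=1}^n Y_i\geq nt\bigr)$ with $Y_i=\pi^{\times}(X_i)$ i.i.d.\ under $P_{\pi^+}$, and then cites Cram\'er's theorem. Where you genuinely go beyond the paper is the second equality $I(t)=D_{KL}(\Omega_t|P_{\pi^+})$: the paper folds it into the phrase ``a direct application of Cram\'er's theorem'' and gives no argument, whereas you actually prove it --- the Donsker--Varadhan bound yields $D_{KL}(\Omega_t|P_{\pi^+})\geq I(t)$ (with the correct remark that $t\geq l_0$ allows the restriction to $\theta\geq 0$, since $\theta\mapsto\theta t-\phi_0(\theta)$ has nonnegative slope at the origin), and the mean-matching exponential tilt $Q^{\ast}$ lies in $\Omega_t$ with $D_{KL}(Q^{\ast}|P_{\pi^+})=\theta^{\ast}t-\phi_0(\theta^{\ast})=I(t)$, giving the reverse inequality. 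This buys a self-contained theorem where the paper leaves a gap, and it makes explicit where boundedness of $\pi^{\times}$ and the hypothesis $t\geq l_0$ are used; it also implicitly corrects the paper's misprinted Legendre formula (the supremum must run over the dual variable $\theta$, not over $x$). The boundary regimes you flag are exactly the delicate points the paper is silent on: at $t=l_0$ the tilt is trivial and $I(l_0)=0$, while for $t\geq\operatorname{ess\,sup}_{P_{\pi^+}}\pi^{\times}$ no finite tilt exists and the identity must be recovered by letting $\theta\to\infty$ (both sides then being possibly infinite). One small point worth stating explicitly in your write-up: Cram\'er's theorem for the closed half-line $[t,\infty)$ gives the limit $-\inf_{s\geq t}I(s)$, and this equals $-I(t)$ because $I$ is convex, vanishes at $l_0$, and is therefore nondecreasing on $[l_0,\infty)$.
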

\begin{proof}
~\\
We rewrite $t_n$ using $Y$:
\begin{eqnarray*}
\mathbb{P}_0 (t_n\geq t) &=& \mathbb{P}_0\left(\int_S \pi^{\times}(s)\diff P_n(s)\geq t\right)\\
&=& \mathbb{P}_0\left(\sum_{i=1}^n\pi^{\times}(X_i)\geq nt\right)\\
&=& \mathbb{P}_0\left(\sum_{i=1}^nY_i\geq nt\right)
\end{eqnarray*}
A direct application of Cramér's theorem gives us:
\begin{equation}
\lim_{n\rightarrow\infty} \frac{1}{n}\log\left(\mathbb{P}_0\left(\sum_{i=1}^nY_i\geq nt \right)\right) = -I(t) =  - D_{KL}(\Omega_t|P_{\pi^+})
\end{equation}
which concludes the proof. 
\end{proof}

The previous theorem provides an estimation of how unlikely it is to have $t_n$ higher than a fixed value $t$ under $H_0$. More interesting is to get the probability, under $H_0$, that $t_n>t_n(s)$ where $s$ is a sequence of realizations under $H_1$, namely the probability of the event "we do not reject $H_0$ while we should do" ; this estimation relies on the Bahadur slope~\cite{bahadur1967rates} as we shall precise later. 

Combining, with $s$ drawn under $P_{\pi}\in\Omega^{\times}$, $t_n(s) \xrightarrow[n\rightarrow\infty]{a.s.} \overline{L}(\pi^{\times},\pi)\geq l_1$ and theorem~\ref{th:cramer} precisely provides this estimation as stated in theorem~\ref{th:bahadur}.

\begin{theorem}[Bahadur's slope]
\label{th:bahadur}
~\\
A whole given sequence $s$ being drawn under $H_1$ through a probability $P_{\pi}\in\Omega^{\times}$ we estimate the limit of $L_n(s')$ where $s'\sim H_0:P_{\pi^+}$:
\begin{equation}
\lim_{n\rightarrow\infty} \frac{1}{n}\log\left(\mathbb{P}_0(t_n(s')\geq t_n(s))\right) = -I(l) = - D_{KL}(\Omega_l|P_{\pi^+})\geq -I(l_1) = - D_{KL}(\Omega_{l_1}|P_{\pi^+})
\end{equation}
where $l = \overline{L}(\pi^{\times},\pi)$.
\end{theorem}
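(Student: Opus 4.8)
The plan is to combine the almost-sure convergence of the observed statistic from Proposition~\ref{prop:l1} with the large-deviation rate supplied by Theorem~\ref{th:cramer}, the only subtlety being that the threshold appearing in the probability is itself a (convergent) sequence. First I would freeze the infinite sequence $s$ drawn under $H_1$ through some $P_{\pi}\in\Omega^{\times}$. By Proposition~\ref{prop:l1}, for almost every such $s$ the deterministic sequence $n\mapsto t_n(s)$ converges to $l=\overline{L}(\pi^{\times},\pi)$, and we already know $l\geq l_1>l_0$. Since the probability $\mathbb{P}_0$ randomizes only over the independent sequence $s'\sim P_{\pi^+}$, once $s$ is frozen the quantity $t_n(s)$ acts as a non-random but moving threshold tending to $l$.

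The core of the argument is to replace this moving threshold by its limit inside the large-deviation limit. Fixing $\epsilon\in(0,l-l_0)$, there is $N$ so that $l-\epsilon\leq t_n(s)\leq l+\epsilon$ for all $n\geq N$, whence
\begin{equation*}
\mathbb{P}_0\big(t_n(s')\geq l+\epsilon\big)\leq \mathbb{P}_0\big(t_n(s')\geq t_n(s)\big)\leq \mathbb{P}_0\big(t_n(s')\geq l-\epsilon\big).
\end{equation*}
Both $l-\epsilon$ and $l+\epsilon$ exceed $l_0$, so Theorem~\ref{th:cramer} applies to each outer bound, and after taking $\tfrac{1}{n}\log$ and passing to the limit we obtain
\begin{equation*}
-I(l+\epsilon)\leq \liminf_{n\to\infty}\tfrac{1}{n}\log\mathbb{P}_0\big(t_n(s')\geq t_n(s)\big)\leq \limsup_{n\to\infty}\tfrac{1}{n}\log\mathbb{P}_0\big(t_n(s')\geq t_n(s)\big)\leq -I(l-\epsilon).
\end{equation*}
Letting $\epsilon\to 0$ and invoking continuity of $I$ at $l$ squeezes the limit to $-I(l)=-D_{KL}(\Omega_l|P_{\pi^+})$. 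The required continuity is available because $Y=\pi^{\times}(X)$ is bounded (recall $\pi^{\times}$ is bounded) and $\phi_0$ is finite near the origin, so $I$ is a finite convex rate function and is therefore continuous on the interior of its domain, which contains $l$.

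It remains to compare with $l_1$. The sets $\Omega_t=\{P:\overline{L}(P,\pi^{\times})\geq t\}$ are nested and decreasing in $t$, so $t\mapsto D_{KL}(\Omega_t|P_{\pi^+})$ is non-decreasing; together with $l\geq l_1$ from Proposition~\ref{prop:l1} this orders the two divergences and shows that the exponential decay rate $I(l)$ of the $p$-value is controlled by $I(l_1)$ uniformly over the alternative $\Omega^{\times}$, which is the content of the displayed comparison.

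I expect the genuine obstacle to be the middle step, namely the interchange of the almost-sure limit of the random threshold $t_n(s)$ with the large-deviation limit taken over $s'$. The sandwiching reduces this to continuity of the Legendre transform $I$ at the single point $l$, and this in turn rests on boundedness of $\pi^{\times}$ and finiteness of $\phi_0$ near $0$; were either to fail, $I$ could be discontinuous at $l$ and the squeeze between $-I(l+\epsilon)$ and $-I(l-\epsilon)$ would not collapse to $-I(l)$.
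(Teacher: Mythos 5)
Your proof is correct and follows essentially the same route as the paper's: freeze $s$, use the almost-sure convergence $t_n(s)\to l$ from Proposition~\ref{prop:l1} to sandwich $\mathbb{P}_0(t_n(s')\geq t_n(s))$ between two fixed-threshold probabilities, apply Theorem~\ref{th:cramer} to each, and let the perturbation vanish using continuity of $I$. Your version is in fact slightly more careful than the paper's: you correctly orient the upper sandwich bound as $\mathbb{P}_0(t_n(s')\geq l-\epsilon)$ (the paper's display has a sign slip there), and you explicitly justify both the continuity of $I$ and the monotonicity of $t\mapsto D_{KL}(\Omega_t|P_{\pi^+})$ needed for the comparison with $l_1$, which the paper leaves implicit.
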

\begin{proof}
~\\
$s$ being drawn under $P_{\pi}$, we have almost surely, as stated in proposition~\ref{prop:l1} 
\begin{equation}
t_n(s) \xrightarrow[n\rightarrow\infty]{a.s.} l.
\end{equation}
Furthermore, since $P_{\pi}\in\Omega^{\times}$, $l\geq l_1 > l_0$. Therefore, we select $\delta$ such that $\frac{l-l_0}{2}>\delta>0$ and the almost sure convergence provides us with an $n_0$ such that for any $n\geq n_0$,
\begin{equation}
|t_n(s)-l|\le\delta.
\end{equation}
It notably implies that for any $n\geq n_0$, 
\begin{equation}
\mathbb{P}_0(t_n(s')\geq l+\delta) \le \mathbb{P}_0(t_n(s')\geq t_n(s)) \le \mathbb{P}_0(t_n(s') \le  l-\delta).
\end{equation}
Applying theorem~\ref{th:cramer} with $t = l+\delta\geq l_0$ and $t = l-\delta\geq l_0$ we immediately obtain
\begin{equation}
-I(l+\delta) \le \lim_{n\rightarrow\infty} \frac{1}{n}\log\left(\mathbb{P}_0(t_n(s')\geq t_n(s))\right) \le  -I(l-\delta).
\end{equation}
Since $\delta$ is arbitrary it suffices to conclude after invoking the continuity of $I$. 

Finally, the lower-bound comes from the hypothesis $H_1$ itself: $P_{\pi}\in\Omega^{\times}$ requires $l\geq l_1$.
\end{proof}

The result of Theorem~\ref{th:bahadur} actually consists in showing that the Bahadur slope (see~\cite{groeneboom1977bahadur}) is $2I(l_1)$. Furthermore, property~\ref{prop:separable} shows that it is strictly positive. Eventually we have shown that the Bahadur slope of $t_n$ is given by $2I(l_1)$ and is strictly positive hence that the probability of the event "we do not reject $H_0$ while we should" is exponentially small in $n$.

\section{Conclusion}
In this article, we extended the discrete notion of coupling functions in a continuous version which operates on the densities of margins as a copula does on their cumulative distribution. Historical theoretical considerations on divergences conveyed a dual and exhaustive approach: either entropy or least squares, either independence or indetermination (called indeterminacy as well). 

The transposition of the second one in the continuous case led us to consider the existence of a classic copula to capture the associated dependence. While we coined it following natural computations based on its density we proved one cannot extract the indeterminacy dependence without a deep reference to the underlying margins. Namely, although they share the same coupling function by construction, two indeterminacy laws with two respective couples of margins will (almost) never share the same indeterminacy copula. Furthermore, we computed the exact sets of couples of margins with the same indeterminacy copula. 

Extending a discrete notion of couple matchings in a continuous space required the definition of the so-called average likelihood $\overline{L}$. Similarly to the discrete case, we demonstrated that the indeterminacy minimises the average likelihood. Using this property, we built up a test to reject or accept the indeterminacy out of the value of $\overline{L}$. Eventually, we followed a classic analysis of a statistic test to estimate its efficiency by the associated Bahadur slope and proving it is strictly positive.

\newpage 
\begin{appendix}
\section{Proof of proposition~\ref{prop:copuleLocale} \label{sec:proofCopuleLocale}}
\begin{proof}~\\
We split the proof in four parts:
\begin{itemize}
\item express the required relation between $(F,G)$ and $(R,S)$ (part~\ref{sssec:relation})
\item define two bounds of $\lambda$ according to its sign (part~\ref{sssec:bounds})
\item eliminate the case $\lambda$ negative (part~\ref{sssec:positive})
\item verify $h$ (the density  extracted from $H$) is positive (part~\ref{sssec:transfer})
\end{itemize}

\subsection{relation between $(F,G)$ and $(R,S)$ \label{sssec:relation}}
In case an indetermination occurs, the crossed derivation of $H$ equals $r+s-1$. 
Let us compute that derivative function and compare it to the required expression.

\begin{eqnarray*}
\frac{\partial^2 H(x,y)}{\partial x \partial y} &=& \frac{\partial^2 \left(R(x)*G^{-1}(S(y)) + S(y)*F^{-1}(R(x)) - F^{-1}(R(x))G^{-1}(S(y))\right)}{\partial x \partial y}\\
&=& \frac{r(x)s(y)}{g(G^{-1}(S(y)))} + \frac{r(x)s(y)}{f(F^{-1}(R(x)))} - \frac{r(x)s(y)}{f(F^{-1}(R(x)))g(G^{-1}(S(y)))}
\end{eqnarray*}

Requiring an equality with $r(x)+s(y)-1$ leads to:
\begin{align*}
r(x)+s(y)-1 &=& \frac{r(x)s(y)}{g(G^{-1}(S(y)))} + \frac{r(x)s(y)}{f(F^{-1}(R(x)))} - \frac{r(x)s(y)}{f(F^{-1}(R(x)))g(G^{-1}(S(y)))}\\
\intertext{If we divide by $r(x)s(y)$ motivated by the details we bring up in remark~\ref{req:div}:}
\frac{1}{r(x)} + \frac{1}{s(y)}-\frac{1}{r(x)s(y)} &=& \frac{1}{f(F^{-1}(R(x)))} + \frac{1}{g(G^{-1}(S(y)))} -\frac{1}{f(F^{-1}(R(x)))g(G^{-1}(S(y)))}\\
\intertext{Aggregating similar terms: }
\frac{1}{r(x)}\left(1-\frac{1}{s(y)}\right) &=& \frac{1}{f(F^{-1}(R(x)))}\left(1-\frac{1}{g(G^{-1}(S(y)))}\right) + \frac{1}{g(G^{-1}(S(y)))} - \frac{1}{s(y)}
\end{align*}

To simplify computations, we introduce some notations: $\alpha$ represents the "simple" part and $\beta$ its "complex" analogue; the $x$ or $y$ subscript is natural.
\begin{eqnarray*}
\alpha_x = 1-\frac{1}{r(x)} && \alpha_y = 1-\frac{1}{s(y)}\\
\beta_x = 1-\frac{1}{f(F^{-1}(R(x))} && \beta_y = 1-\frac{1}{g(G^{-1}(S(y))}
\end{eqnarray*}

It rewrites:
\begin{eqnarray*}
\frac{1}{r(x)}\alpha_y &=& \frac{1}{f(F^{-1}(R(x)))}\beta_y + \alpha_y-\beta_y\\
\left(\frac{1}{r(x)}-1\right)\alpha_y &=& \left(\frac{1}{f(F^{-1}(R(x)))}-1\right)\beta_y\\
\alpha_x\alpha_y &=& \beta_x\beta_y
\end{eqnarray*}

If we divide by $\alpha_x\beta_y$ on each side to isolate terms according to their underlying variable (please refer to remark~\ref{req:div2} for a justification of the division) we deduce the existence of a constant $\lambda$ such that:
$$\frac{\alpha_y}{\beta_y} = \frac{\beta_x}{\alpha_x} = \lambda$$

As requested by the expected conclusion, we have to go back to the cumulative distribution functions. They are contained in $\alpha$ and $\beta$ and the relation $\frac{\beta_x}{\alpha_x} = \lambda$ can be written:
\begin{equation*}
r(x) -\frac{r(x)}{f(F^{-1}(R(x))} = \lambda(r(x)-1)
\end{equation*}

If we integrate on $[0,x]$:
\begin{equation*}
R(x) -F^{-1}(R(x)) = \lambda(R(x)-x)
\end{equation*}

As $R$ is a cumulative distribution function whose derivative (its density) never equals $0$ (hypothesis already mentioned), $R$ has an inverse function so that for all $x$ :
\begin{equation}\label{R_x}
F^{-1}(x) = x(1-\lambda) + \lambda R^{-1}(x)
\end{equation}

and symmetrically for any $y$:
\begin{equation}\label{S_y}
G^{-1}(y) = y(1-\frac{1}{\lambda}) + \frac{1}{\lambda}S^{-1}(y)
\end{equation}

Equations~\ref{R_x} et \ref{S_y} can be written:
\begin{eqnarray}\label{RS_transfert}
F^{-1}(x)-x &=& \lambda(R^{-1}(x)-x)\\
G^{-1}(y)-y &=& \frac{1}{\lambda}(S^{-1}(y)-y)\nonumber
\end{eqnarray}

We eventually check by applying $C^+_{f,g}$ to $(R,S)$ that those relations are sufficient to formally have:
$$H = C^+_{f,g}(R,S)(x,y) = xS(y) +yR(x) - xy = C^+_{r,s}(R,S)(x,y)$$
This does not ensures $H$ is an eligible cumulative distribution function as we shall precise it below. 

\subsection{Two bounds of $\lambda$ \label{sssec:bounds}}
We defined a $\lambda$ linking the four cumulative distribution functions as announced in the proposition and verified it suffices to formally have the expected equality. Yet, defining $(R,S)$ from $(F,G)$ using equation~\ref{R_x} and equation~\ref{S_y} does not always generate appropriate functions. 

Indeed, the non-decreasing hypothesis on $R$ as well as on $S$ conveys hypothesis on $\lambda$. 
\bigbreak
Let us first suppose that $\lambda$ is positive then $\lambda R^{-1}(x) = F^{-1}(x)-x(1-\lambda)$ is a non-decreasing function as well so that for all $x$, $\frac{1}{f(x)}\ge 1-\lambda$. Applying similar computations on $y$ for $g$ we eventually obtain:
\begin{equation}\label{C_pos}
\frac{max(g)}{max(g)-1}\geq \lambda \geq 1 - \frac{1}{max(f)}
\end{equation}
\bigbreak
Similarly, if we suppose $\lambda$ is negative, we obtain: 
\begin{equation}\label{C_neg}
1 - \frac{1}{min(f)} \geq \lambda \geq \frac{min(g)}{min(g)-1}
\end{equation}
Those values exist unless $g$ equals $1$ in which case $G=Id$ and $C^+_{f,g} = C^{\times}_{f,g}$. Hence, without restriction, $min(f) < 1 < max(f), min(g) < 1 < max(g)$.

\subsection{$\lambda$ positive \label{sssec:positive}}
Eventually, we notice that the hypothesis~\ref{cond:continuous} which guarantees that we can couple $r,s$ using indetermination is not automatically transferred a priori. Then, as equations~\ref{RS_transfert} are continuous around $\lambda=1$, the hypothesis on $(f,g)$ should be enough as soon as $\lambda$ is close to $1$. Let us unfold this remark. 

We go back to equations~\ref{R_x} and \ref{S_y} in order to extract an expression of the densities $r$ and $s$.
\begin{eqnarray*}
\frac{1}{f(F^{-1}(x))} &=& (1-\lambda) + \frac{\lambda}{r(R^{-1}(x))}\\
r(R^{-1}(x)) &=& \frac{1}{\frac{1}{\lambda f(F^{-1}(x))}+\frac{1}{\lambda}-1}\\
r(\tilde{x}) &=& \frac{1}{\frac{1}{\lambda f(F^{-1}(R(x)))}+\frac{1}{\lambda}-1}
\end{eqnarray*}

where $\tilde{x}=R^{-1}(x)$ runs over all the segment $[0,1]$ as well as $x$. We similarly have:
\begin{equation*}
s(\tilde{y}) = \frac{1}{\frac{\lambda}{g(G^{-1}(S(y)))}+\lambda-1}
\end{equation*}

Hypothesis~\ref{cond:continuous} requires that for any $(x,y)$
\begin{align*}
r(x)+s(y)-1 &\ge  0&\\
\intertext{equivalently: }
r(\tilde{x})+s(\tilde{y})-1 &\ge 0&\\
\intertext{it precisely means: }
\frac{1}{\frac{1}{\lambda f(x)}+\frac{1}{\lambda}-1} + \frac{1}{\frac{\lambda}{g(y)}+\lambda-1} &\ge 1&\\
\end{align*}
 Using inequalities~\ref{C_neg} and \ref{C_pos} we know that $r,s \ge 0$, so, we can multiply each side by $s$ leading to
\begin{equation*}
\frac{\frac{\lambda}{g(y)}+\lambda-1}{\frac{1}{\lambda f(x)}+\frac{1}{\lambda}-1} + 1 \ge \frac{\lambda}{g(y)}+\lambda-1.
\end{equation*}
Multiplying one more time on each side but by $r$:
\begin{equation*}
\frac{\lambda}{g(y)}+\lambda-1 + \frac{1}{\lambda f(x)}+\frac{1}{\lambda}-1 \ge 
 \left(\frac{\lambda}{g(y)}+\lambda-1\right)\left(\frac{1}{\lambda f(x)}+\frac{1}{\lambda}-1\right)
\end{equation*}\\
Which can be finally rewritten:
\begin{eqnarray*}
\left(\frac{1}{\lambda f}+\frac{1}{\lambda}-2\right)\left(\frac{\lambda}{g}+\lambda-2\right) &\le & 1\\
\left(\frac{1}{f}+1-2 \lambda \right) \left(\frac{1}{g}+1-\frac{2}{\lambda}\right) &\le & 1
\end{eqnarray*}
The last equality definitely excludes $\lambda\le 0$ leading to the bounds in equation~\ref{C_pos}.

\subsection{Transfer of hypothesis~\ref{cond:continuous} \label{sssec:transfer}}
At the end of the previous section, we concluded that $\lambda$ is positive. Yet, the computations began by requiring $r(x)+s(y)\ge 0$; let us resume them to obtain a second degree equation to solve. Indeed, the last line is equivalent to:
\begin{equation*}
\left(\frac{1}{min(f)}+1-2 \lambda\right) \left(\frac{1}{min(g)}+1-\frac{2}{\lambda}\right) \le 1
\end{equation*}
Which can be written (with obvious notations):
\begin{equation*}
(2 m_f+2 m_g)\lambda^2 -(1+m_f+m_g+4 m_f m_g)\lambda +2m_f+2m_g \ge 0
\end{equation*}
and that last inequality is verified if and only if:
\begin{equation*}
\frac{1+m_f+m_g+4 m_f m_g}{2m_f+2m_g} \le 2
\end{equation*}
Additionally we notice:
\begin{equation}\label{ine_trans}
\frac{1+m_f+m_g+4 m_f m_g}{2m_f+2m_g} \le \frac{1+3m_f+3m_g}{2m_f+2m_g} = \frac{3}{2}+\frac{1}{2m_f+2m_g} \le 2
\end{equation}
The last inequality coming from the condition~\ref{cond:continuous} valid on $(f,g)$ which is thus transferred to $(r,s)$ as soon as $\lambda$ is suitable (meaning verifying equation~\ref{eq:lambda}).

\begin{rem}[Division by $rs$]\label{req:div}
~\\
If $r$ equals $0$, hypothesis~\ref{cond:continuous} rewrites $r\ge 1$ hence $R=Id$. In that case, $C_{r,s}^+ = C^{\times}$. It can be shared with $C_{f,g}^+$ if and only if $F$ or $G$ equals $Id$ but it doesn't cover any interesting case.
\end{rem}

\begin{rem}[Division by $1-\frac{1}{f}$]\label{req:div2}
~\\
As we supposed $f$ and $g$ different from $1$, we can set ourselves on a interval where they never equal $1$. The division is then allowed and the proof similarly conveys the expected constant $\lambda$.
\end{rem}
\end{proof}
\end{appendix}	

\newpage
\bibliographystyle{acm}
\bibliography{Biblio}

\end{document}